\let\@font@warningori\@font@warning
\newcommand\shutup{\def\@font@warning##1{}}
\newcommand\youcanspeak{\let\@font@warning\@font@warningori}
\theoremstyle{plain}
\newtheorem{theorem}{Theorem}
\newtheorem*{theorem*}{Theorem}
\newtheorem{corollary}[theorem]{Corollary}
\newtheorem{lemma}[theorem]{Lemma}
\newtheorem*{lemma*}{Lemma}
\newtheorem*{claim*}{Claim}
\newtheorem*{conjecture*}{Conjecture}
\newtheorem*{problem*}{Problem}
\newtheorem{definition}{Definition}
\newtheorem*{definition*}{Definition}
\theoremstyle{remark}
\newtheorem{remark}{Remark}
\newtheorem*{remark*}{Remark}
\newtheorem*{algorithm*}{Algorithm}
\providecommand{\setZ}{\mathbb{Z}}
\providecommand{\setR}{\mathbb{R}}
\newcommand{\conv}{\textrm{conv}}
\newcommand{\proj}{\textrm{proj}}
\newcommand{\supp}{\textrm{supp}}
\newcommand{\bsupp}{\overline{\textrm{supp}}}
\def\tchoose#1#2{ {\textstyle{{#1} \choose {#2}}} }
        \def\drawRect#1#2#3#4#5{
           \FPeval{\x2}{#2 + #4} 
           \FPeval{\y2}{#3 + #5} 
           \pspolygon[#1](#2,#3)(\x2,#3)(\x2,\y2)(#2,\y2)
        }
\begin{document}

\title{Directed Steiner Tree and the Lasserre Hierarchy}

\author{Thomas Rothvoß\thanks{Supported by the Alexander von Humboldt Foundation within the Feodor Lynen program, by ONR grant N00014-11-1-0053 and by NSF contract
CCF-0829878.}\vspace{3mm}\\M.I.T. \\
{ \tt{rothvoss@math.mit.edu}} \vspace{3mm} }

\maketitle

\begin{abstract}
\noindent The goal for the \textsc{Directed Steiner Tree} problem is to 
find a minimum cost tree in a directed graph $G=(V,E)$ that connects all 
terminals $X$ to a given root $r$. It is well known that modulo a logarithmic factor
it suffices to consider acyclic graphs where the nodes are arranged in $\ell \leq \log |X|$
levels. Unfortunately the natural LP formulation has a $\Omega(\sqrt{|X|})$ integrality gap
already for $5$ levels. 
We show that for every $\ell$, the \emph{$O(\ell)$-round Lasserre Strengthening} of this LP 
has integrality gap $O(\ell \log |X| )$. This provides a polynomial time $|X|^{\varepsilon}$-approximation
and a $O(\log^3 |X| )$ approximation in 
$O(n^{\log |X|})$ time, matching the best known approximation guarantee obtained by a greedy algorithm of Charikar et al.
\end{abstract}

\section{Introduction}

Most optimization problems that appear in  combinatorial optimization 
can be written as an integer linear problem,  
say in the form $\min\{ c^Tx \mid Ax \geq b; \; x \in \{ 0,1\}^n \}$, where the system $Ax\geq b$
represents the problem structure. Since $c$ is linear, this is 
optimizing over the convex hull $K_I := \conv(K \cap \{ 0,1\}^n)$, where $K := \{ x \in \setR^n \mid Ax \geq b \}$ denotes a polyhedron.

Such optimization problems are $\mathbf{NP}$-hard in general, thus a standard approach 
for obtaining approximate solutions
is to optimize instead over the relaxation $K$ and then try to extract 
a close integral solution. 
This approach yields in many cases solutions whose quality matches the lower bound 
provided by the PCP Theorem or the Unique Games Conjecture (which is the case e.g. for {\sc Set Cover}~\cite{SetCoverIntGap-Lovasz75,SetCover-lnN-hardness-FeigeJACM98}, {\sc Vertex Cover}~\cite{VertexCover-2-UGC-hardness-KhotRegevJCSS08} and {\sc Facility Location}~\cite{FacilityLocation1.463-hardness-GuhaKhuller-JALG99,FacilityLocation-1.488-apx-Li-ICALP2011}. 
However, there is  a significant number of 
problems, where the \emph{integrality gap} between $K$ and $K_I$ appears to be far higher
than the approximability of the problem, so that a stronger formulation is needed. 

At least in the field of approximation algorithm, researchers have so far mostly preferred 
problem-specific inequalities to lower the integrality gap
(a nice example is the $O(1)$-apx for {\sc Min-Sum Set Cover}~\cite{MinSumSetCover-BansalGuptaKrishnaswamySODA2010}).
However there are very general techniques that
can be used to strengthen the convex relaxation $K$.

Especially in the field of (computational) integer programming, the approach of
cutting planes is very popular. In the \emph{Gomory-Chvátal Closure} $CG(K) \subseteq K$ 
one adds simultaneously cuts  $a^Tx \leq \lfloor\beta\rfloor$ for all valid inequalities $a^Tx \leq \beta$ with $a \in \setZ^n$. 
On the positive side, after at most $O(n^2 \log n)$ 
iterative applications of the closure operation, one reaches $K_I$~\cite{CG-rank-is-n2logN-EisenbrandSchulz-IPCO99} (assuming that $K\subseteq[0,1]^n$). But the drawback is 
that already optimizing over the first closure $CG(K)$ is $\mathbf{coNP}$-hard~\cite{OptimizingOverCGisCoNPhard-Eisenbrand-Combinatorica99}. Singh and Talwar~\cite{CG-ranksForkdimMatchingAndVC-SinghTalwar-APPROX2010} studied the effect of Gomory-Chvátal cuts to the integrality gap of hypergraph matchings and other problems .

However, more promising for the sake of approximation algorithms are probably LP/SDP hierarchies like the ones of
\emph{Balas, Ceria, Cornuéjols}~\cite{BalasCeriaCornuejols-Hierarchy-MathProg93}; 
\emph{Lovász, Schrijver}~\cite{LovaszSchrijverHierarchy91} (with LP-strengthening $LS$ and an SDP-strengthening $LS_+$); \emph{Sherali, Adams}~\cite{SheraliAdamsHierarchy1990}
or \emph{Lasserre}~\cite{ExplicitExactSDP-Lasserre-IPCO01,GlobalOpt-Lasserre01}.
  On the $t$-th level, they all use $n^{O(t)}$ additional variables to strengthen $K$
(thus the term \emph{Lift-and-Project Methods}) and 
they all can be solved in time $n^{O(t)}$. Moreover, for $t=n$ they define the integral hull $K_I$ and
for any set of $|S| \leq t$ variables, a solution $x$ can be written as convex combinations of vectors 
from $K$ that are integral on $S$. Despite these similarities, the Lasserre SDP relaxation is strictly stronger
than all the others. We refer to the survey of Laurent~\cite{SDP-hierarchies-Survey-Laurent2003}
for a detailed  comparison.

Up to now, there have been few (positive) result on the use of hierarchies 
in approximation algorithms. 
One successful application of Chlamtá{\v{c}}~\cite{n0.20-apx-for-3colgraphsChlamtacFOCS07} uses the 3rd level 
of the Lasserre relaxation to find $O(n^{0.2072})$-colorings
for 3-colorable graphs. 
It lies in the range of possibilities that $O(\log n)$ levels of Lasserre might be enough to obtain 
a coloring with $O(\log n)$ colors in 3-colorable graphs~\cite{ChromaticNumberApx-AroraCharikarChlamtacSTOC06}. 
In fact, for special graph classes, there has been recent progress by Arora and Ge \cite{GraphColoringForSpecialClassesViaLasserreAroraGeAPPROX11}.
Chlamtá{\v{c}} and Singh~\cite{IndSet-in-3-regular-graphs-ChlamtacSingh-APPROX08} showed that $O(1/\gamma^2)$ rounds of a mixed hierarchy 
can be used to obtain an independent set of size $n^{\Omega(1/\gamma^2)}$ in a 3-uniform hypergraph,
whenever it has an independent set of size $\gamma n$.
After a constant number of rounds of Sherali-Adams, the integrality gap
for the matching polytope
reduces to $1+\varepsilon$~\cite{SA-Relaxation-for-matching-MathieuSinclair-STOC09}.
The same is true for {\sc MaxCut} in dense graphs (i.e. graphs with $\Omega(n^2)$ edges) \cite{LPs-for-MaxCut-Vega-Kenyon-Mathieu-SODA07}.
The Sherali-Adams hierarchy is also used in~\cite{MaxMinAllocationViaDegreeLowerBoundedArb-BateniCharikarGuruswami-STOC09} to find degree lower-bounded arborescences. 

Guruswami and Sinop provide approximation algorithms for quadratic integer programming problems
whose performance guarantees depend on the eigenvalues of the graph Laplacian \cite{Lasserre-and-QIP-GuruswamiSinopECCC11}. Also the Lasserre-based approach of
\cite{SDP-rounding-via-global-correlation-BarakRaghavendraSteurerFOCS2011} 
for {\sc Unique Games} depends on the eigenvalues of the underling graph adjacency matrix. 
Though the $O(\sqrt{\log n})$-apx of Arora, Rao and Vazirani~\cite{SparsestCut-sqrtLogN-apx-AroraRaoVazirani-STOC2004} for {\sc Sparsest Cut} does not explicitly use 
hierarchies, their triangle inequality is implied by $O(1)$ rounds of Lasserre. 
For a more detailed overview on the use of hierarchies in approximation
algorithms, see the recent survey of Chlamtá{\v{c}} and Tulsiani~\cite{ConvexRelaxations-survey-Chlamtac-Tulsiani}.

Moreover, integrality gap lower bounds exist for various problems~\cite{Lasserre-and-Cut-polytope-Laurent-MOR03,LS-SDP-lower-bounds-for-Max3Sat-kVC-SetCover-STOC2005,LS-VC-lower-bound-Tourlakis2006,LinearRound-7over6-gap-for-LS-SDP-CCC07,2gap-for-LS-LP-after-OmegaN-rounds-STOC2007,LS-SDP-gap-2-for-VC-FOCS2007,LinearLowerBoundLasserre-Schoenebeck-FOCS2008,SA-Gaps-for-MaxCut-VC-SparsestCut-STOC2009,LinearRoundGaps-Lasserre-Gaps-Tulsiani-STOC09}. To name only few of these results, even a linear
number of Lasserre rounds cannot refute unsatisfiable constraint satisfaction problems~\cite{LinearLowerBoundLasserre-Schoenebeck-FOCS2008} and the gap for {\sc Graph Coloring} is still 
$k$ versus $2^{\Omega(k)}$~\cite{LinearRoundGaps-Lasserre-Gaps-Tulsiani-STOC09}. 
In contrast, the LP-based hierarchies LS and SA cannot even reduce 
the {\sc MaxCut} gap below $2-\varepsilon$ after $\Omega(n)$~\cite{2gap-for-LS-LP-after-OmegaN-rounds-STOC2007} and $n^{\delta}$~\cite{SA-Gaps-for-MaxCut-VC-SparsestCut-STOC2009} many rounds, respectively.
Recall that already a single round of the SDP based hierarchies reduces the gap to $1.13$.

In this paper, we apply the Lasserre relaxation to the flow-based linear programming relaxation 
of {\sc{Directed Steiner Tree}}.
The input for this problem  consists of a  directed
graph $G=(V,E)$ with edge cost $c : E \to \setR_+$, a \emph{root} $r \in V$ and \emph{terminals} $X \subseteq V$. 
The goal is to compute a subset $T \subseteq E$ of edges such that there is an $r$-$s$
path in $T$ for each terminal $s$. Note that the cheapest such set always forms a tree
(see Figure~\ref{fig:DST-example}). 

By a straightforward reduction from {\sc Set Cover} one easily sees that the problem 
is $\Omega(\log n)$-hard~\cite{SetCover-lnN-hardness-FeigeJACM98} to approximate.
Zelikovsky~\cite{Zelikovsky97aseries} obtained a $|X|^{\varepsilon}$-approximation for every constant $\varepsilon > 0$ using a greedy approach.
In fact, he provided the following useful insight:
\begin{theorem}[\cite{Zelikovsky97aseries,PolymatroidSteinerTreeCalinecuZelikovsky2005}] \label{thm:lLevelTree}
For every $\ell \geq 1$, there is a tree $T$ (potentially using edges in the metric closure)
of cost $c(T) \leq \ell \cdot |X|^{1/\ell}\cdot OPT$ such that every $r$-$s$ path (with $s \in X$) in $T$ contains
at most $\ell$ edges. 
\end{theorem}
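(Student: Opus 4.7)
The plan is to prove the statement by induction on $\ell$. The base case $\ell=1$ is immediate: starting from any optimum Steiner tree $T^*$, replace each $r$-to-$s$ path inside $T^*$ by a single metric-closure edge $(r,s)$ of cost at most $c(T^*)=OPT$; the resulting star is a $1$-level tree of cost at most $|X|\cdot OPT$, matching the claimed bound.

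For the inductive step, let $T^*$ be an optimum Steiner tree for $(r,X)$ and set $K := |X|^{(\ell-1)/\ell}$. The key combinatorial step is to extract hubs $v_1,\dots,v_m \in V(T^*)$ together with a disjoint partition $X = X_1 \cup \cdots \cup X_m$ such that $X_i \subseteq X \cap V(T^*_{v_i})$, $|X_i| = O(K)$, and $m = O(|X|^{1/\ell})$. A natural way to produce this partition is a post-order \emph{bubble-up} sweep of $T^*$: at each node $v$ accumulate the still-un-assigned terminals of its subtree into a pool; as soon as the pool reaches size $K$, emit it as a new group $X_i$ with hub $v_i := v$ and reset the pool; any terminals still un-assigned at the root are absorbed into a final catch-all group with hub $r$. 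From this partition I assemble the $\ell$-level tree by (i) using one metric-closure edge $(r,v_i)$ per hub, and (ii) recursively building an $(\ell-1)$-level tree on each sub-instance $(v_i, X_i)$ via the inductive hypothesis.

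For the cost analysis, every top-layer edge $(r,v_i)$ costs at most $d_{T^*}(r,v_i) \le c(T^*) = OPT$, so the top layer contributes at most $m \cdot OPT = O(|X|^{1/\ell})\cdot OPT$. Writing $OPT_i$ for the optimum cost of the sub-instance at $v_i$, the inductive hypothesis applied to $(v_i, X_i)$ gives
\[
  \sum_{i=1}^m (\ell-1)\,|X_i|^{1/(\ell-1)}\cdot OPT_i \;\le\; (\ell-1)\,K^{1/(\ell-1)}\sum_i OPT_i \;=\; (\ell-1)\,|X|^{1/\ell}\sum_i OPT_i,
\]
and since $OPT_i$ is at most the cost of the Steiner subtree of $T^*$ spanning $\{v_i\}\cup X_i$, this is at most $(\ell-1)\,|X|^{1/\ell}\cdot OPT$ provided the sub-Steiner-trees are (essentially) edge-disjoint, i.e.\ provided $\sum_i OPT_i \le O(OPT)$. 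This edge-disjointness is where I expect the main technical work to go: the point of the bubble-up construction is that the un-assigned terminals that ever leave a subtree $T^*_{u'}$ through its incoming edge form a single ``overflow block'' which is absorbed into a unique group at some ancestor of $u'$, so every edge of $T^*$ lies below the hub of $O(1)$ groups. Combining the top and recursive contributions then yields the claimed bound $\ell\cdot|X|^{1/\ell}\cdot OPT$, with any slack in the constants absorbed by a careful choice of $K$.
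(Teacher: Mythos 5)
First, note that the paper does not prove this statement at all: Theorem~\ref{thm:lLevelTree} is imported from Zelikovsky and Calinescu--Zelikovsky, so there is no in-paper proof to compare against. Your strategy (induct on $\ell$; decompose the optimum tree into roughly $|X|^{1/\ell}$ terminal groups of roughly $K=|X|^{(\ell-1)/\ell}$ terminals each, hung off hubs; connect $r$ to each hub by one metric-closure edge; recurse inside each group) is indeed the standard height-reduction argument, and the part you flag as the main technical work is actually fine: with the bubble-up done by merging child blocks atomically, every edge of $T^*$ is crossed by a single block of still-unassigned terminals which ends up in a single group, so the connecting subtrees are genuinely edge-disjoint and $\sum_i OPT_i \le OPT$ exactly, not just $O(OPT)$.

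The genuine gap is the constant, which is exactly the delicate point here (the paper's footnote recalls that Zelikovsky's original constant was wrong and had to be corrected, which is why the statement reads $\ell\cdot|X|^{1/\ell}$ and not $|X|^{1/\ell}$). Your argument proves $O(\ell\,|X|^{1/\ell})\cdot OPT$, not $\ell\,|X|^{1/\ell}\cdot OPT$, and the slack cannot be ``absorbed by a careful choice of $K$'': if groups have size at most $S$ and there are at most $m$ of them, the construction costs at most $m\cdot OPT+(\ell-1)S^{1/(\ell-1)}\cdot OPT$, and minimizing $k/S+(\ell-1)S^{1/(\ell-1)}$ over $S$ gives exactly $\ell k^{1/\ell}$ at $S=k^{(\ell-1)/\ell}$, with no room to spare. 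But your bubble-up emits groups whose sizes lie in $[K,2K)$ (a block arriving at a node can overshoot), which costs a factor $2^{1/(\ell-1)}$ in the recursive term --- already at $\ell=2$ you get roughly $3\sqrt{|X|}\cdot OPT$ instead of $2\sqrt{|X|}\cdot OPT$; and if instead you cap group sizes at $K$, the indivisibility of child blocks can force about $2k/K$ groups, doubling the top layer. So as written the induction does not close for the stated inequality. A correct proof of the exact constant needs a finer induction, e.g.\ peel off one group of size as close to $K$ as the tree allows, recurse on the \emph{remaining} terminals at the same level $\ell$ (inducting on $|X|$ as well), and use concavity of $x\mapsto x^{1/\ell}$ to pay for the peeled group --- this is essentially what Calinescu and Zelikovsky do. For the purposes of this paper a constant-factor loss would be harmless (it only affects constants inside the $O(\ell\log|X|)$ gap and the $O(\log^3|X|)$ guarantee), but it does not establish the theorem as stated.
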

In other words, at the cost of a factor\footnote{
The claim in \cite{Zelikovsky97aseries} was initially $|X|^{1/\ell}$, which
is incorrect, though it was later on heavily used in the literature. 
In a later paper, 
Calinescu and Zelikovsky\cite{PolymatroidSteinerTreeCalinecuZelikovsky2005} change the claim to $\ell\cdot |X|^{1/\ell}$. 
As a consequence, the $O(\log^2 |X|)$-apx in \cite{DirectedSteinerTree-polylog-qpolytime-apx-CharikarEtAl-Journal99} has to be changed to a $O(\log^3 |X|)$-apx.} $\ell \cdot |X|^{1/\ell}$ in the approximation guarantee, one 
may assume that the graph is acyclic and the nodes are arranged in $\ell$ \emph{levels}\footnote{This can easily be achieved by taking $\ell+1$ copies of the node set in the
original graph and insert cost-$0$ edges between copies of the same node.
} (observe that for $\ell = \log |X|$, one has $\ell \cdot |X|^{1/\ell} = \ell \cdot (2^{\log |X|})^{1/\log |X|}= O(\log |X|)$).
\begin{figure}
\begin{center}
\psset{xunit=1cm,yunit=1.0cm,linewidth=1pt}
\begin{pspicture}(0,0.8)(5,4.0)
  \drawRect{linearc=0.1,fillcolor=lightgray,fillstyle=solid,linewidth=0.5pt,linecolor=gray}{0.5}{0.7}{4}{0.6}
  \drawRect{linearc=0.1,fillcolor=lightgray,fillstyle=solid,linewidth=0.5pt,linecolor=gray}{0.5}{1.7}{4}{0.6}
  \drawRect{linearc=0.1,fillcolor=lightgray,fillstyle=solid,linewidth=0.5pt,linecolor=gray}{0.5}{2.7}{4}{0.6}
  \drawRect{linearc=0.1,fillcolor=lightgray,fillstyle=solid,linewidth=0.5pt,linecolor=gray}{0.5}{3.7}{4}{0.6}
  \cnodeput[framesep=1.5pt,fillstyle=solid,fillcolor=white](2.5,4){r}{$r$}
  \cnodeput[framesep=5pt,fillstyle=solid,fillcolor=white](1.5,3){u1}{} 
  \cnodeput[framesep=5pt,fillstyle=solid,fillcolor=white](2.5,3){u2}{}
  \cnodeput[framesep=5pt,fillstyle=solid,fillcolor=white](3.5,3){u3}{}
  \cnodeput[framesep=5pt,fillstyle=solid,fillcolor=white](1,2){v1}{}
  \cnodeput[framesep=5pt,fillstyle=solid,fillcolor=white](2,2){v2}{}
  \cnodeput[framesep=5pt,fillstyle=solid,fillcolor=white](3,2){v3}{}
  \cnodeput[framesep=5pt,fillstyle=solid,fillcolor=white](4,2){v4}{}
 
  \fnode[linewidth=1.0pt,framesize=12pt,fillstyle=solid,fillcolor=white](1,1){s1} 
  \fnode[linewidth=1.0pt,framesize=12pt,fillstyle=solid,fillcolor=white](2,1){s2}
  \fnode[linewidth=1.0pt,framesize=12pt,fillstyle=solid,fillcolor=white](3,1){s3}
  \fnode[linewidth=1.0pt,framesize=12pt,fillstyle=solid,fillcolor=white](4,1){s4}
  \ncline[arrowsize=5pt,linecolor=gray]{->}{r}{u2} \nbput[labelsep=0pt,npos=0.5]{$\gray{4}$}
  \ncline[arrowsize=5pt,linecolor=gray]{->}{u2}{v2} \naput[labelsep=0pt,npos=0.5]{$\gray{8}$}
  \ncline[arrowsize=5pt,linecolor=gray]{->}{u2}{v3} \naput[labelsep=0pt,npos=0.7]{$\gray{9}$}
  \ncline[arrowsize=5pt,linecolor=gray]{->}{u2}{v4} \naput[labelsep=0pt,npos=0.3]{$\gray{7}$}
  \ncline[arrowsize=5pt,linecolor=gray]{->}{v2}{s1} \nbput[labelsep=0pt,npos=0.5]{$\gray{6}$}
  \ncline[arrowsize=5pt,linecolor=gray]{->}{v3}{s2} \naput[labelsep=0pt,npos=0.8]{$\gray{7}$}
  \ncline[arrowsize=5pt,linecolor=gray]{->}{v3}{s3} \naput[labelsep=0pt,npos=0.5]{$\gray{4}$}
  \ncline[arrowsize=5pt,linecolor=gray]{->}{v3}{s4} \naput[labelsep=0pt,npos=0.5]{$\gray{8}$}

  \ncline[arrowsize=5pt,linewidth=1.5pt]{->}{r}{u1} \nbput[labelsep=0pt,npos=0.5]{$3$}
  \ncline[arrowsize=5pt,linewidth=1.5pt]{->}{r}{u3} \naput[labelsep=0pt,npos=0.5]{$2$}
  \ncline[arrowsize=5pt,linewidth=1.5pt]{->}{u1}{v1} \nbput[labelsep=0pt,npos=0.5]{$3$}
  \ncline[arrowsize=5pt,linewidth=1.5pt]{->}{u1}{v2} \naput[labelsep=0pt,npos=0.5]{$5$}
  \ncline[arrowsize=5pt,linewidth=1.5pt]{->}{u3}{v4} \naput[labelsep=0pt,npos=0.5]{$2$}
  \ncline[arrowsize=5pt,linewidth=1.5pt]{->}{v1}{s1} \nbput[labelsep=0pt,npos=0.5]{$2$}
  \ncline[arrowsize=5pt,linewidth=1.5pt]{->}{v2}{s2} \nbput[labelsep=0pt,npos=0.5]{$0$}
  \ncline[arrowsize=5pt,linewidth=1.5pt]{->}{v2}{s3} \naput[labelsep=0pt,npos=0.3]{$1$}
  \ncline[arrowsize=5pt,linewidth=1.5pt]{->}{v4}{s4} \naput[labelsep=0pt,npos=0.5]{$1$}
  \rput[c](-1,1){{$V_{\ell}:$}}
  \rput[c](-1,2){{$V_{\ell-1}:$}}
  \rput[c](-1,2.5){\Large{$\vdots$}}
  \rput[c](-1,3){{$V_{1}:$}}
  \rput[c](-1,4){{$V_{0}:$}}
\end{pspicture}
\caption{Illustration of layered graph $G$ with $\ell = 3$. Edges are labelled with their cost. Black edges denote an optimum {\sc Directed Steiner Tree} solution. Rectangles depict terminals.\label{fig:DST-example}}
\end{center}
\end{figure}
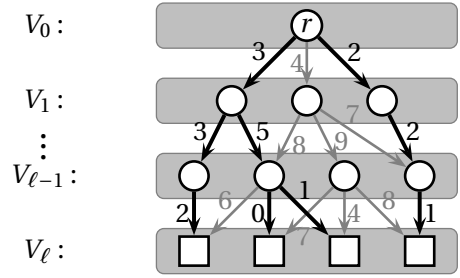
Later Charikar, Chekuri, Cheung, Goel, Guha and Li~\cite{DirectedSteinerTree-polylog-qpolytime-apx-CharikarEtAl-Journal99}) gave a $O(\log^3 |X|)$ approximation in time $n^{O(\log |X|)}$ again using a sophisticated greedy algorithm.
So far, methods based on linear programming were less successful. In fact, Zosin and Khuller~\cite{DirectedSteinerTree-IntegralityGap-ZosinKhuller-SODA2002}
show that the natural flow based LP relaxation has an integrality gap of $\Omega(\sqrt{|X|})$. 
The importance of {\sc Directed Steiner Tree} lies in the fact that it generalizes a huge
number of problems, e.g. {\sc Set Cover}, (non-metric, multilevel) {\sc Facility Location} and   $\textsc{Group Steiner Tree}$.
For the latter problem, the input consists of an \emph{undirected} (weighted) graph $G=(V,E)$, 
groups $G_1,\ldots,G_k \subseteq V$ of terminals and a root $r \in V$. The goal here is to find a tree $T \subseteq E$ of minimum cost
$c(T)$ that connects at least one terminal from every group to the root.
The state of the art for $\textsc{Group Steiner Tree}$ is still the elegant approach 
of Garg, Konjevod and Ravi~\cite{GroupSteinerTreePolylogApx-GargKonjevodRavi-2000}. Using 
 $O(\log n)$-average distortion tree embeddings, one can assume that the input graph itself is a tree. 
Then \cite{GroupSteinerTreePolylogApx-GargKonjevodRavi-2000} solve a flow based LP and provide a 
 rounding scheme, which gives a $O(\log^2 n)$-approximation w.r.t. the optimum solution in the 
tree graph. 
Surprisingly, \cite{IntegralityGapForGroupSteinerTree-and-DirectedSteinerTree-HalperinEtAl-SODA03} found a tree instance, which indeed has an integrality gap of
 $\Omega(\log^2 n)$. Later Halperin and Krauthgamer~\cite{PolylogarithmicInapproximabilityForGroupSteinerTree-HalperinKrauthgamer-STOC03} even 
proved a $\Omega(\log^{2-\varepsilon} n)$ inapproximability for {\sc Group Steiner Tree} (on tree graphs) 
and in turn also for  {\sc Directed Steiner Tree}.

\subsection*{Our contribution}

Many researchers have failed in designing stronger LP relaxations for
{\sc Directed Steiner Tree} (see Alon, Moitra and Sudakov~\cite{GraphsWithLargeInducedMatchings-AlonMoitraSudakov-STOC12} for a counterexample to a promising
approach). We make partial progress by
showing that in an $\ell$-level graph, already $O(\ell)$ rounds of the Lasserre hierarchy
drastically reduce the integrality gap of the natural flow-based LP for
{\sc Directed Steiner Tree} from $\Omega(\sqrt{|X|})$ (for $\ell\geq5$) down to $O(\ell \log |X| )$.
This gives an alternative polylogarithmic approximation in quasi-polynomial time (and the
first one that is based on convex relaxations). 

In this paper, we try to promote the application of hierarchies in approximation algorithms. 
For this sake, we demonstrate how the Lasserre relaxation can be used as a black box
in order to obtain  powerful (yet reasonably simple) approximation algorithms.


From a technical view point we adapt the 
rounding scheme of Garg, Konjevod and Ravi~\cite{GroupSteinerTreePolylogApx-GargKonjevodRavi-2000}.
Though their algorithm and analysis crucially relies
on the fact that the input graph itself is a tree, it turns out that one can use 
instead the values of the auxiliary variables
in the $O(\ell)$-round SDP to perform the rounding. Another ingredient
for our analysis is the recent \emph{Decomposition Theorem} of 
Karlin, Mathieu and Nguyen~\cite{IntGapLasserre-for-Knapsack-IPCO2011}
for the Lasserre hierarchy.

\section{The Lasserre Hierarchy}

In this section, we provide a definition of the Lasserre hierarchy and all 
properties that are necessary for our purpose. In our notation, we mainly follow the 
  survey of Laurent~\cite{SDP-hierarchies-Survey-Laurent2003}. 
Let $\mathcal{P}_t([n]) := \{ I \subseteq [n] \mid |I| \leq t\}$ be the set of all index sets of
cardinality at most $t$ and let $y \in \setR^{\mathcal{P}_{2t+2}([n])}$ be a vector with entries 
$y_I$ for all $I\subseteq[n]$ with $|I| \leq 2t+2$. Intuitively $y_{\{i\}}$ represents the original
variable $x_i$ and the new variables $y_I$ represent $\prod_{i\in I} x_i$. 
We define the \emph{moment matrix} $M_{t+1}(y) \in \setR^{\mathcal{P}_{t+1}([n]) × \mathcal{P}_{t+1}([n])}$ by
\[
  (M_{t+1}(y))_{I,J} := y_{I\cup J} \quad \forall |I|,|J| \leq t+1.
\]
For a linear constraint $a^Tx \geq \beta$ with $a \in \setR^n$ and $\beta \in \setR$ we define ${a \choose \beta} * y$ as the
vector $z$ 
with $z_I := \sum_{i\in[n]} a_iy_{I\cup\{i\}}-\beta y_{I}$.
\footnote{This notation was initially introduced 
for multivariate degree-one polynomials $g(x) = \sum_{I} g_I \prod_{i\in I} x_i$ which induce constraints
of the form $g(x) \geq 0$. 
In this general case, one defines $(g * y)_I := \sum_{K \subseteq [n]} g_K \cdot y_{I \cup K}$. Note that 
for a linear constraint $a^Tx \geq \beta$, one has $g_{\{i\}}=a_i, g_{\emptyset}=-\beta$ and $g_I=0$ for $|I|>1$. 
We stick to this notation to be consistent with the existing literature.} 

\begin{definition}
Let $K = \{ x \in \setR^n \mid Ax \geq b\}$. We define the \emph{$t$-th level of the Lasserre hierarchy} 
$\textsc{Las}_t(K)$ as the set of vectors $y \in \setR^{\mathcal{P}_{2t+2}([n])}$ that
satisfy 
\[
  M_{t+1}(y) \succeq 0; \hspace{1cm} M_{t}(\tchoose{A_{\ell}}{b_{\ell}} * y) \succeq 0 \quad \forall\ell\in[m]; \hspace{1cm} y_{\emptyset}=1.
\]
Furthermore, let  $\textsc{Las}_t^{\textrm{proj}} := \{ (y_{\{1\}},\ldots,y_{\{n\}}) \mid y \in \textsc{Las}_t(K) \}$ be the projection on the original variables. 
\end{definition}
Intuitively, the PSD-constraint $M_{t}(\tchoose{A_{\ell}}{b_{\ell}} * y) \succeq 0$ guarantees
that $y$ satisfies the $\ell$-th linear constraint, while $M_{t+1}(y) \succeq 0$ takes care
that the variables are consistent (e.g. it guarantees that $y_{\{1,2\}} \in [y_{\{1\}}+y_{\{2\}}-1, \min\{ y_{\{1\}},y_{\{2\}}\}]$).
The Lasserre hierarchy can even be applied to non-convex semi-algebraic sets -- but for the sake
of a simple presentation we stick to polytopes.
Fortunately, one can use the Lasserre relaxation conveniently as a black-box. We list all properties, that 
we need for our approximation algorithm:

\begin{theorem} \label{thm:PropertiesOfLasT}
Let $K = \{ x \in \setR^n \mid Ax \geq b\}$ and $y \in {\textsc{Las}}_t(K)$. Then the following holds: 
\begin{enumerate}[(a)] \addtolength{\itemsep}{-0.4\baselineskip}
\item \label{item:PropertiesOfLasT-Hierarchy} $\conv(K \cap \{ 0,1\}^n) = \textsc{Las}_n^{\textrm{proj}}(K) \subseteq \textsc{Las}_{n-1}^{\textrm{proj}}(K) \subseteq \ldots \subseteq \textsc{Las}_0^{\textrm{proj}}(K) \subseteq K$.
\item \label{item:PropertiesOfLasT-Monotone} One has $0 \leq y_I \leq y_J \leq 1$ for all $I\supseteq J$ with  $0\leq|J|\leq |I| \leq t$.
\item \label{item:PropertiesOfLasT-Cond-for-yI0} Let $I \subseteq [n]$ with $|I| \leq t$. Then $K \cap \{ x \in \setR^n \mid x_i =1 \; \forall i \in I\} = \emptyset \; \Longrightarrow \; y_I = 0$.
\item \label{item:PropertiesOfLasT-IntOnI} Let $I \subseteq [n]$ with $|I| \leq t$. Then $y \in \conv(\{z \in \textsc{Las}_{t-|I|}(K) \mid z_{\{i\}} \in \{ 0,1\} \; \forall i\in I\})$.\footnote{Formally spoken, vectors in $\textsc{Las}_{t- |I|}$ have less dimensions than $y$. Thus it would be more 
correct to write $z_{|\mathcal{P}_{2(t - |I| ) + 2}([n])} \in \textsc{Las}_{t - |I|}(K)$ where
$z_{|\mathcal{P}_{2t+2-2|I|}([n])}$ denotes the restriction of $z$ to all entries $I$ with $|I| \leq 2(t - |I| ) + 2$.}
\item \label{item:PropertiesOfLasT-DecThm} Let $S \subseteq [n]$ be a subset of variables 
such that $\max\{ |I| : I \subseteq S; x \in K; x_i = 1\; \forall i \in I\} \leq k < t$.
Then $y \in \conv(\{z \in \textsc{Las}_{t-k}(K) \mid z_{\{i\}} \in \{ 0,1\} \; \forall i\in S\})$.
\item \label{item:PropertiesOfLasT-yI1-iff-yi1-for-all-i-in-I} For any $|I| \leq t$ one has $y_I=1 \Leftrightarrow \bigwedge_{i\in I} (y_{\{i\}}=1)$.
\item \label{item:PropertiesOfLasT-product} For $|I| \leq t$: $(\forall i\in I : y_{\{i\}} \in \{ 0,1\}) \; \Longrightarrow \; y_I = \prod_{i\in I} y_{\{i\}}$.
\item \label{item:PropertiesOfLasT-yI1-implies-yIJ-is-yJ} Let $|I|,|J| \leq t$ and $y_{I}=1$. Then $y_{I \cup J} = y_J$.
\end{enumerate}
\end{theorem}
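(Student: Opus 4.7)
The plan is to treat most of these items as standard facts about the Lasserre hierarchy, sketching the PSD-minor arguments explicitly and citing the survey~\cite{SDP-hierarchies-Survey-Laurent2003} for the well-known parts, then proving (f)--(h) as quick corollaries of (b) together with the Decomposition Theorem (e).

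For item (a) I would cite the standard fact that Lasserre dominates Sherali-Adams and that level $n$ already recovers the integer hull; the containments in the chain come directly from the observation that a valid $y \in \textsc{Las}_{t+1}(K)$ restricts to a valid $y \in \textsc{Las}_t(K)$, since any principal submatrix of a PSD matrix is PSD. For item (b), I would extract the $2 \times 2$ principal minor of $M_{t+1}(y)$ indexed by $I$ and $J$ with $J \subseteq I$: its entries are $y_I, y_I, y_I, y_J$ (since $I \cup J = I$), and PSD of this minor gives $y_I \geq 0$ and $y_I(y_J - y_I) \geq 0$, so $y_J \geq y_I$; taking $J = \emptyset$ yields $y_I \leq y_{\emptyset} = 1$. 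For item (c), if the linear system $Ax \geq b$ together with $x_i = 1$ for $i \in I$ is infeasible, Farkas gives a nonnegative combination of the constraints producing $-1 \geq 0$ on any integer point fixing $I$; combining with the PSD constraints $M_t({A_\ell \choose b_\ell} * y) \succeq 0$ and the $I$-indexed diagonal entries forces $y_I = 0$.

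Items (d) and (e) will rely directly on the Decomposition Theorem of Karlin, Mathieu and Nguyen~\cite{IntGapLasserre-for-Knapsack-IPCO2011}: their statement is precisely (e), and (d) is the special case $S = I$ (with $k = |I|$), since any subset of $I$ that can be extended inside $K$ has size at most $|I|$. This is the one ingredient I do not expect to reprove, and it is also the main technical obstacle underlying the whole list.

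Properties (f)--(h) then follow by short deductions. For (f), the forward direction uses (b): $y_I = 1 \leq y_{\{i\}} \leq 1$ for every $i \in I$, forcing $y_{\{i\}} = 1$; for the converse, apply (d) with this $I$ to write $y$ as a convex combination of $z \in \textsc{Las}_{t - |I|}(K)$ with $z_{\{i\}} \in \{0,1\}$; since the original $y_{\{i\}}$ equal $1$, the combination is supported on $z$'s with $z_{\{i\}} = 1$ for all $i \in I$, and for each such $z$ iterating the $2 \times 2$ PSD minor argument (or invoking (g) on $z$) gives $z_I = 1$, hence $y_I = 1$. For (g), if some $y_{\{i\}} = 0$ then (b) forces $y_I = 0$ and the product is $0$, while if all $y_{\{i\}} = 1$ then (f) gives $y_I = 1$. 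Finally, for (h), apply (d) to decompose $y$ over integer-on-$I$ solutions $z$; since $y_I = 1$ implies by (f) that $y_{\{i\}} = 1$ for all $i \in I$, the decomposition is supported on $z$ with $z_{\{i\}} = 1$ for all $i \in I$, and for each such $z$ the $2 \times 2$ PSD minor of $M_{t+1}(z)$ indexed by $I$ and $J$ with entries $z_I, z_{I \cup J}, z_{I \cup J}, z_J = 1, z_{I \cup J}, z_{I \cup J}, z_J$ and determinant $z_J - z_{I \cup J}^2 \geq 0$ combined with $z_{I \cup J} \leq z_J$ from (b) forces $z_{I \cup J} = z_J$; averaging gives $y_{I \cup J} = y_J$.
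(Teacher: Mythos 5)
The core of what the paper actually proves here is item (\ref{item:PropertiesOfLasT-yI1-implies-yIJ-is-yJ}) (items (a)--(c), (e) are quoted from Laurent and from Karlin--Mathieu--Nguyen, and (f), (g) are easy consequences), and it is exactly there that your argument breaks. The final step of your proof of (h) claims that positive semidefiniteness of the $2\times 2$ minor indexed by $I,J$, i.e.\ $\det\left(\begin{smallmatrix}1 & z_{I\cup J}\\ z_{I\cup J} & z_J\end{smallmatrix}\right)=z_J-z_{I\cup J}^2\geq 0$, together with the monotonicity $z_{I\cup J}\leq z_J$ from (b), forces $z_{I\cup J}=z_J$. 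It does not: $z_J=\tfrac12$, $z_{I\cup J}=\tfrac14$ satisfies both inequalities strictly. The same defect appears in your converse direction of (f): ``iterating the $2\times2$ PSD minor argument'' cannot upgrade $z_{\{i\}}=1$ for all $i\in I$ to $z_I=1$ (those minors only yield upper bounds such as $z_{\{i,j\}}\leq 1$), and the fallback ``invoking (g) on $z$'' is circular, since you derive (g) from (f). What is needed is a minor that includes the row and column of $\emptyset$: with $y_I=1$, the principal submatrix of $M_{t+1}(y)$ indexed by $\{\emptyset,I,J\}$ is $\left(\begin{smallmatrix}1&1&y_J\\ 1&1&y_{I\cup J}\\ y_J&y_{I\cup J}&y_J\end{smallmatrix}\right)$, whose determinant equals $-(y_J-y_{I\cup J})^2\geq 0$, forcing $y_{I\cup J}=y_J$; equivalently, in a Gram representation $\langle v_\emptyset,v_I\rangle=y_I=1=\|v_\emptyset\|^2=\|v_I\|^2$ forces $v_I=v_\emptyset$, hence $y_{I\cup J}=\langle v_I,v_J\rangle=\langle v_\emptyset,v_J\rangle=y_J$. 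This proves (h) directly on $y$, with no decomposition, and then (f) and (g) follow from (b) and (h) in the way you intend --- this is the paper's route.

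Beyond this, the detour through (d) in your treatment of (f) and (h) also breaks the level bookkeeping: after conditioning, $z$ lies in $\textsc{Las}_{t-|I|}(K)$, so its moment matrix is $M_{t-|I|+1}(z)$ (not $M_{t+1}(z)$ as you write) and only indexes sets of size at most $t-|I|+1$; for the full range $|I|,|J|\leq t$ allowed in (f)--(h), the entries $z_I$ and $z_{I\cup J}$ need not even exist in the restricted vector. Working directly with $M_{t+1}(y)$ avoids all of this. Two smaller points: your reduction of (d) to (e) with $S=I$ and $k=|I|$ needs $k<t$, so it misses the boundary case $|I|=t$ (the paper instead proves (d) separately via partial assignments/conditioning in the appendix); and while citing Laurent for (a)--(c) and Karlin--Mathieu--Nguyen for (e) matches the paper's main-body proof, the paper additionally reproves (a), (d), (e) in its appendix for self-containment.
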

\begin{proof}
Proofs of \eqref{item:PropertiesOfLasT-Hierarchy},\eqref{item:PropertiesOfLasT-Monotone},\eqref{item:PropertiesOfLasT-Cond-for-yI0} can be found in Laurent~\cite{SDP-hierarchies-Survey-Laurent2003}. 
\eqref{item:PropertiesOfLasT-DecThm} is the Decomposition Theorem of \cite{IntGapLasserre-for-Knapsack-IPCO2011}.
\eqref{item:PropertiesOfLasT-yI1-iff-yi1-for-all-i-in-I} and \eqref{item:PropertiesOfLasT-product} follow easily from \eqref{item:PropertiesOfLasT-Monotone} and \eqref{item:PropertiesOfLasT-yI1-implies-yIJ-is-yJ}.
For \eqref{item:PropertiesOfLasT-yI1-implies-yIJ-is-yJ},
consider the principal submatrix 
\[
 M = \begin{pmatrix} 
 1 & 1 & y_J \\
 1 & 1 & y_{I\cup J} \\
 y_{J} & y_{I\cup J} & y_J
\end{pmatrix}
\]
of $M_{t+1}(y)$ that is induced by indices $\{ \emptyset,I,J\}$ (substituting $y_I$ with $1$).
Then $\det(M) = -(y_J-y_{I\cup J})^2 \geq 0$ implies that $y_J = y_{I \cup J}$.
\end{proof}
Though all these properties are well known, to be fully self contained, we provide a complete
introduction with proofs of the non-trivial statements \eqref{item:PropertiesOfLasT-Hierarchy},\eqref{item:PropertiesOfLasT-IntOnI},\eqref{item:PropertiesOfLasT-DecThm} in the appendix. 

Especially  \eqref{item:PropertiesOfLasT-DecThm} is a remarkably strong property that does not
hold for the Sherali-Adams or Lovász-Schrijver hierarchy~(see \cite{IntGapLasserre-for-Knapsack-IPCO2011}).
For example, it implies that after $t=O(\frac{1}{\varepsilon})$ rounds, the 
integrality gap for the {\sc Knapsack} polytope is bounded by $1+\varepsilon$ (taking $S$ as all items that have profit
at least $\varepsilon\cdot OPT$). The same bound holds for the {\sc Matching} polytope $\{ x \in \setR_+^E \mid x(\delta(v)) \leq 1 \; \forall v \in V\}$
(since Property  \eqref{item:PropertiesOfLasT-DecThm} implies all Blossom inequalities up to $2t+1$ nodes). 
Another immediate consequence is that the {\sc Independent Set} polytope $\{ x\in \setR_+^V \mid x_{u}+x_v\leq1 \; \forall \{u,v\} \in E\}$ 
describes the integral hull after $\alpha(G)$ rounds of Lasserre (where $\alpha(G)$ is the stable set number of the considered graph).

\section{The linear program}

The natural LP formulation for {\sc Directed Steiner Tree} sends a unit flow from the root to each terminal 
$s \in X$ (represented by variables $f_{s,e}$). The amount of capacity that has to be paid on edge $e$ is 
  $y_e = \max\{ f_{s,e} \mid s \in X \}$. We abbreviate $\delta^+(v) := \{ (v,u) \mid (v,u) \in E\}$ ($\delta^-(v) := \{ (u,v) \mid (u,v) \in E\}$, resp.) 
as the edges outgoing (ingoing, resp.) from $v$ and $y(E') := \sum_{e \in E'} y_e$. The LP is
\begin{eqnarray*}
  \min \sum_{e\in E} c_ey_e & & \\
  \sum_{e \in \delta^+(v)} f_{s,e} - \sum_{e \in \delta^-(v)} f_{s,e} &=& \begin{cases} 1 & v=r \\
-1 & v=s \\
0 & \textrm{otherwise} \end{cases} \quad \forall s\in X\; \forall v \in V\\
  f_{s,e} &\leq& y_e \quad \forall s \in X \; \forall e\in E  \\
  y(\delta^-(v)) &\leq& 1 \quad \forall v \in V \\
  0\leq y_e &\leq& 1 \quad \forall e \in E \\
  0 \leq f_{s,e} &\leq& 1 \quad \forall s \in X \; \forall e \in E
\end{eqnarray*}
Note that we have an additional constraint  $y(\delta^-(v)) \leq 1$ (i.e. only one ingoing edge for each node) 
which is going to help us in the analysis. 
Let $K \subseteq \setR^E × \setR^{X × E}$ be the set of fractional solutions. 
This LP has an integrality gap\footnote{Unfortunately, the instance has a number of nodes which is exponential in the number of terminals. Of course, the instance of \cite{IntegralityGapForGroupSteinerTree-and-DirectedSteinerTree-HalperinEtAl-SODA03} provides a $\Omega(\log^2 n)$ gap. To the best of our
knowledge, there is no known instance with a $\omega(\log^2 n)$ integrality gap.} of $\Omega(\sqrt{|X|})$ even if the number of layers is  $5$~\cite{DirectedSteinerTree-IntegralityGap-ZosinKhuller-SODA2002}. 
From now on, we make the choice $t := 2\ell$, i.e. we consider the 
$2\ell$-round Lasserre strengthening of the above LP. 
Let $\mathcal{V}_t = \{ (P,H) \mid P \subseteq E; H \subseteq X × E; |P| + |H| \leq 2t+2 \}$ be the set of variable indices for the $t$-th level of Lasserre.
In other words  $\textsc{Las}_t(K) \subseteq [0,1]^{\mathcal{V}_t}$.
Let $Y = (Y_{P,H})_{(P,H) \in \mathcal{V}_t} \in \textsc{Las}_t(K)$ be an optimum solution for the Lasserre relaxation, which can be computed in time $n^{O(t)}$.
We abbreviate $OPT_f := \sum_{e\in E} c_ey_{\{e\}}$ as the objective function value. 

We will only address either groups of $y_e$ variables (then we write $y_H := Y_{H,\emptyset}$ for $H \subseteq E$),
or we address groups of $f_{s,e}$ variables for the same terminal $s \in X$. Then we write
 $f_{s,H} := Y_{\emptyset,\{ (s,e) \mid e \in H\}}$.

\section{The rounding algorithm}

By Theorem~\ref{thm:lLevelTree}, we may assume that the node set is partitioned 
into \emph{levels} $V_0=\{r\},V_1,\ldots,V_{\ell-1},V_{\ell}=X$ and all edges are running between consecutive layers (i.e. $E \subseteq \bigcup_{j=1}^{\ell} (V_{j-1} × V_{j})$). See Figure~\ref{fig:DST-example} for an illustration.
In the following, we will present an adaptation of the \cite{GroupSteinerTreePolylogApx-GargKonjevodRavi-2000}
rounding scheme to sample a set $T$ of paths from a distribution that depends on $Y$. 
For this sake, starting at layer $0$, we will 
go through all layers and for each path $P$ (ending in node $u$) that is sampled so far, 
we will extend it to $P \cup \{ (u,v) \}$ with probability $\frac{y_{P\cup \{ (u,v)\}}}{y_P}$.
\begin{enumerate*}
\item[(1)] $T := \emptyset$
\item[(2)] FOR ALL $e \in \delta^+(r)$ DO 
  \begin{enumerate*}
  \item[(3)] independently, with prob.  $y_{\{e\}}$, add path $\{e\}$ to $T$ 
  \end{enumerate*}
\item[(4)] FOR $j=1,\ldots,\ell-1$ DO 
  \begin{enumerate*}
  \item[(5)] FOR ALL $u\in V_j$ and all $r$-$u$ paths $P \in T$ DO 
     \begin{enumerate*}
     \item[(6)] FOR ALL $e \in \delta^+(u)$ DO
        \begin{enumerate*}
        \item[(7)] independently with prob.  $\frac{y_{P \cup \{ e\}}}{y_P}$ add $P \cup \{ e \}$ to $T$ 
        \end{enumerate*}
     \end{enumerate*}
  \end{enumerate*}
\item[(8)] return $E(T)$.
\end{enumerate*}
With $V(P)$ we denote the set of vertices on path $P$. 
Furthermore, let  $E(T) := \bigcup_{P \in T} P$ be the set of all edges on any path of $T$. 
Also let $V(T) := \bigcup_{P \in T} V(P)$.
Note that we did not remove partial paths (i.e. paths from $r$ to some layer $j < \ell$), which
will turn out to be convenient later.


\section{The analysis}

The analysis consists of two parts: 
\begin{enumerate} 
\item[(i)] We show that for each edge $e$ the probability to be included is $\Pr[e \in E(T)] \leq y_{\{ e\}}$.
\item[(ii)] We prove that for each terminal $s \in X$, the probability to be connected by a path satisfies $\Pr[s \in V(T)] \geq \Omega(\frac{1}{\ell})$. 
\end{enumerate}
Part $(i)$ provides that the expected cost for the sampled paths is at most $OPT_f$, while 
part $(ii)$ implies that after repeating the sampling procedure $O(\ell \log |X| )$ times, each terminal
will be connected to the root with high probability. 
Let us begin with part $(i)$. 

\subsection*{Upper bounding the expected cost}

For each node  $v \in V$, let $Q(v) := \{ P \mid P\textrm{ is }r\textrm{-}v\textrm{ path} \}$ be the set of
paths from the root to $v$. 
For an edge  $e=(u,v) \in E$ we denote $Q(e)$ as the set of $r$-$v$ paths that have $e$ as
last edge. 

\begin{lemma} \label{lem:Pr-P-in-T-is-yP}
Let $P$ be an $r$-$v$ path with $v\in V$. Then $\Pr[P \in T] = y_P$.
\end{lemma}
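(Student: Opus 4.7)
The plan is to prove the lemma by a straightforward induction on the length $|P|$ of the path. The rounding algorithm constructs paths layer by layer: a path $P$ ending at $v \in V_j$ is present in $T$ precisely when every prefix of $P$ survived the corresponding random extension step. The ratios used to extend paths are chosen exactly so that the telescoping product of conditional probabilities collapses to $y_P$.

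For the base case $|P|=1$, the path consists of a single edge $e \in \delta^+(r)$, and step (3) of the algorithm adds $\{e\}$ to $T$ independently with probability $y_{\{e\}} = y_P$. For the inductive step, let $P$ be an $r$-$v$ path with $v \in V_j$, $j \geq 1$, and write $P = P' \cup \{e\}$, where $P'$ is the $r$-$u$ prefix and $e = (u,v)$. By induction $\Pr[P' \in T] = y_{P'}$, and by step (7), conditional on $P' \in T$, the path $P$ is added independently with probability $y_{P}/y_{P'}$. Hence
\[
\Pr[P \in T] \;=\; \Pr[P' \in T] \cdot \frac{y_{P}}{y_{P'}} \;=\; y_{P'} \cdot \frac{y_{P}}{y_{P'}} \;=\; y_P.
\]

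The only subtlety, and what I expect to be the main (very minor) obstacle, is handling the degenerate case $y_{P'} = 0$, in which the conditional extension probability $y_P/y_{P'}$ is not literally well-defined. This case is resolved by invoking the Lasserre monotonicity property, Theorem~\ref{thm:PropertiesOfLasT}\eqref{item:PropertiesOfLasT-Monotone}: since $P' \subseteq P$ as edge sets and $|P| \leq 2t+2$ (which is guaranteed because $|P| \leq \ell \leq t$), we have $0 \le y_P \le y_{P'} = 0$, so $y_P = 0$. On the other hand, the inductive hypothesis gives $\Pr[P' \in T] = 0$, so $P$ is never added, and $\Pr[P \in T] = 0 = y_P$. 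Interpreting $0/0$ as $0$ in step (7) (or simply skipping the step when $y_{P'}=0$) makes the identity hold uniformly, completing the induction.
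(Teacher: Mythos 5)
Your proof is correct and is essentially the paper's own argument: the paper writes the same telescoping product $y_{\{e_1\}}\cdot\frac{y_{\{e_1,e_2\}}}{y_{\{e_1\}}}\cdots\frac{y_P}{y_{P\setminus\{e_j\}}}=y_P$, which your induction merely formalizes step by step. Your extra care with the degenerate case $y_{P'}=0$ (resolved via monotonicity, Theorem~\ref{thm:PropertiesOfLasT}\eqref{item:PropertiesOfLasT-Monotone}) is a sound minor refinement that the paper leaves implicit.
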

\begin{proof}
Let  $P = (e_1,\ldots,e_j)$ be the path with $e_i \in V_{i-1} × V_{i}$.
Then the probability that the algorithm samples path $P$ is
\[
  \Pr[P \in T] = y_{\{e_1\}}\cdot \frac{y_{\{e_1,e_2\}}}{y_{\{e_1\}}} \cdot \frac{y_{\{e_1,e_2,e_3\}}}{y_{\{e_1,e_2\}}} \cdot \ldots \cdot \frac{y_P}{y_{P \backslash \{e_j\}}} = y_P.
\]
\end{proof}

The next lemma will imply that each edge $e$ is sampled with probability
at most its fractional value $y_{\{e\}}$:
\begin{lemma} \label{lem:Sum-of-yP-PinQe-at-most-ye}
For any edge $e \in E$, one has $  \sum_{P \in Q(e)} y_P \leq y_{\{e\}}$.
\end{lemma}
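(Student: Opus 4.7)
The plan is to lift the LP constraint $y(\delta^-(v))\le 1$ into a ``conditional'' form that can be applied not only to $y_{\{e\}}$ but to an arbitrary Lasserre moment $y_I$. Concretely, I would first prove the auxiliary inequality
\[
(\star)\qquad \sum_{e'\in\delta^-(v)} y_{I\cup\{e'\}}\;\le\; y_I
\]
for every $v\in V$ and every $I$ with $|I|\le t=2\ell$. This falls out of the definition of $\textsc{Las}_t(K)$: rewrite the in-degree constraint as $a^Ty\ge -1$ with $a_{e'}=-1$ on $\delta^-(v)$ and $a_i=0$ elsewhere. The PSD requirement $M_t(\tchoose{a}{-1}*y)\succeq 0$ forces every diagonal entry to be nonnegative, and the $I$-th diagonal entry is exactly $(\tchoose{a}{-1}*y)_I = y_I-\sum_{e'\in\delta^-(v)} y_{I\cup\{e'\}}$.

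Given $(\star)$, the lemma follows by direct telescoping. Fix $e=(u,v)$ with $u\in V_{j-1}$, so that every $P\in Q(e)$ has the form $(e_1,e_2,\ldots,e_{j-1},e)$ with $e_i\in V_{i-1}\times V_i$. Grouping the paths by their ``tail'' $(e_2,\ldots,e_{j-1},e)$ and applying $(\star)$ with $I=\{e_2,\ldots,e_{j-1},e\}$ to the inner sum over $e_1$ yields
\[
\sum_{P\in Q(e)} y_P \;=\; \sum_{(e_2,\ldots,e_{j-1})}\; \sum_{e_1\in\delta^-(v_1)} y_{\{e_1,\ldots,e_{j-1},e\}} \;\le\; \sum_{(e_2,\ldots,e_{j-1})} y_{\{e_2,\ldots,e_{j-1},e\}},
\]
where $v_1$ denotes the starting endpoint of $e_2$. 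Iterating the same step on $e_2,e_3,\ldots,e_{j-1}$ strips the prefix away one edge at a time and terminates with $y_{\{e\}}$ on the right. Every index set appearing in an application of $(\star)$ has size at most $j\le\ell<t$, so $(\star)$ is always legitimate.

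The main obstacle I expect is recognizing that $(\star)$ is a direct consequence of the PSD slack condition built into the Lasserre definition rather than a combinatorial fact about the LP; once it is isolated, the remainder is a mechanical telescoping that relies only on the layered structure of $G$ (which makes the predecessors of the next node in an $r$-to-$u$ path equal to $\delta^-(\cdot)$).
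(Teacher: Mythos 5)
Your proof is correct, but it takes a genuinely different route from the paper. The paper proves the lemma by induction over the layers: it conditions on $x_e=1$ via Theorem~\ref{thm:PropertiesOfLasT}.\eqref{item:PropertiesOfLasT-IntOnI}, writes $Y=y_{\{e\}}Y^{(1)}+(1-y_{\{e\}})Y^{(0)}$, applies the inductive claim (stated for all solutions in $\textsc{Las}_{t'}(K)$ with $t'\geq j$, since each conditioning step costs one round) to the ingoing edges of $u$ in the conditioned solution $Y^{(1)}$, and finishes with Theorem~\ref{thm:PropertiesOfLasT}.\eqref{item:PropertiesOfLasT-yI1-implies-yIJ-is-yJ} and the in-degree LP constraint on the projection of $Y^{(1)}$. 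You instead bypass conditioning entirely: your inequality $(\star)$ is the ``lifted'' in-degree constraint, and it does indeed follow from the nonnegativity of the diagonal entry at index $I$ of $M_t(\tchoose{a}{-1}*y)\succeq 0$, after which the layered structure lets you telescope along the path. What your approach buys is that it only uses diagonal entries of the constraint moment matrices (a Sherali--Adams-strength fact) and only needs moments of degree up to $\ell$, so it is more elementary and does not consume hierarchy levels; what the paper's approach buys is uniformity with the rest of the analysis, since the same conditioning/decomposition machinery is the engine of Lemma~\ref{lem:SumOver-yPs}, and it avoids having to write out lifted constraints explicitly. One small point to tighten in your telescoping: after the first application of $(\star)$, the inner sums at later steps range only over those $e_i\in\delta^-(v_i)$ whose tail node is reachable from $r$, which may be a proper subset of $\delta^-(v_i)$; you should invoke the nonnegativity of all moments (Theorem~\ref{thm:PropertiesOfLasT}.\eqref{item:PropertiesOfLasT-Monotone}) to bound this subset sum by the full sum before applying $(\star)$. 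With that one-line addition your argument is complete.
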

\begin{proof}
We prove the following claim by induction over $j=0,\ldots,\ell-1$: \emph{For 
any edge $e \in V_j × V_{j+1}$ and any solution $\bar{Y} \in \textsc{Las}_{t'}(K)$
with $t' \geq j$ one has $ \sum_{P\in Q(e)} \bar{y}_P \leq \bar{y}_{\{e\}}$.}\footnote{We abbreviate $\bar{y}_P := \bar{Y}_{P,\emptyset}$.}

The claim is clear for $j=0$, thus consider an edge
 $e=(u,v)\in V_{j} × V_{j+1}$  between the $j$th and the $(j+1)$th level. 
Applying Thm.~\ref{thm:PropertiesOfLasT}.(\ref{item:PropertiesOfLasT-IntOnI}) 
with $I := \{ e \}$ we write 
$Y = y_{\{e\}}\cdot Y^{(1)} + (1-y_{\{e\}})\cdot Y^{(0)}$ such that $Y^{(0)},Y^{(1)} \in \textsc{Las}_{t'-1}(K)$ and $y_{\{e\}}^{(1)}=1$ as well as $y_{\{e\}}^{(0)} = 0$.
For edges  $e' \in \delta^-(u)$ ingoing to $u$, we apply the induction hypothesis and get $\sum_{P \in Q(e')} y_P^{(1)} \leq y_{\{e'\}}^{(1)}$. 
Since $y_{\{e\}}^{(1)}=1$, we know that $y_{\{e\} \cup P}^{(1)} = y_P^{(1)}$ (see Theorem~\ref{thm:PropertiesOfLasT}.\eqref{item:PropertiesOfLasT-yI1-implies-yIJ-is-yJ}). It follows that
\begin{eqnarray*}
\sum_{P \in Q(e)} y_P &=& y_{\{e\}} \cdot \sum_{P \in Q(e)} y_P^{(1)} = y_{\{e\}}\sum_{e' \in \delta^-(u)} \underbrace{\sum_{P \in Q(e')} \underbrace{y_{P \cup \{e\}}^{(1)}}_{=y_P^{(1)}}}_{\leq  y_{\{e'\}}^{(1)}}
\leq y_{\{e\}} \underbrace{\sum_{e \in \delta^-(u)} y_{\{e'\}}^{(1)}}_{\leq1} \leq y_{\{e\}}.
\end{eqnarray*}
\end{proof}
Combining both Lemmas~\ref{lem:Pr-P-in-T-is-yP} and \ref{lem:Sum-of-yP-PinQe-at-most-ye}, 
we obtain $\Pr[e \in E(T)] \leq y_{\{e\}}$ and consequently $E[c(E(T))] \leq \sum_{e\in E} c_e y_{\{e\}}$ by linearity of expectation.

\subsection*{Lower bounding the success probability}

In the following Lemma, we relate the ``path variables'' $y_P$ with the 
edge capacities $y_{\{e\}}$. In fact, $(a)$ will imply that each terminal is connected once in expectation and $(b)$ 
bounds the probability that $s$ is connected by a path containing a fixed subpath $P'$:
\begin{lemma}  \label{lem:SumOver-yPs}
Fix a terminal $s \in X$ and an $r$-$v$ path  $P'$ for some $v \in V$. Then
\begin{enumerate} \addtolength{\itemsep}{-0.4\baselineskip}
\item[a)] $\sum_{P \in Q(s)} y_P = 1$
\item[b)] $\sum_{P \in Q(s): P' \subseteq P} y_{P} \leq y_{P'}$.
\end{enumerate}
\end{lemma}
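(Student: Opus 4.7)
\emph{Part (a).} I would establish the stronger inductive claim that for every $v \in V$,
\[
\sum_{P \in Q(v)} y_P \;\geq\; f_{s,\delta^-(v)} := \sum_{e \in \delta^-(v)} f_{s,\{e\}},
\]
by induction on the level $j$ of $v$. The base case $v = r$ reduces to $y_\emptyset = 1 \geq 0$. For the inductive step at $v \in V_{j+1}$, fix $e = (u,v) \in \delta^-(v)$ and, via Theorem~\ref{thm:PropertiesOfLasT}\eqref{item:PropertiesOfLasT-IntOnI} with $I = \{e\}$, decompose $Y = y_{\{e\}}Y^{(1)} + (1-y_{\{e\}})Y^{(0)}$ with $y_{\{e\}}^{(1)} = 1$ and $y_{\{e\}}^{(0)} = 0$. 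Theorem~\ref{thm:PropertiesOfLasT}\eqref{item:PropertiesOfLasT-Monotone},\eqref{item:PropertiesOfLasT-yI1-implies-yIJ-is-yJ} give $y_{P'\cup\{e\}} = y_{\{e\}}\,y_{P'}^{(1)}$, and the same splitting applied to the flow variable (using $f_{s,e} \leq y_e$, whence $f_{s,\{e\}}^{(0)} = 0$) gives $f_{s,\{e\}} = y_{\{e\}}\,f_{s,\{e\}}^{(1)}$. Invoking the inductive hypothesis on $Y^{(1)} \in \textsc{Las}_{t-1}(K)$ at $u$, together with the LP's flow conservation $f_{s,\delta^-(u)}^{(1)} = f_{s,\delta^+(u)}^{(1)} \geq f_{s,\{e\}}^{(1)}$, yields
\[
\sum_{P' \in Q(u)} y_{P' \cup \{e\}} \;=\; y_{\{e\}} \sum_{P' \in Q(u)} y_{P'}^{(1)} \;\geq\; y_{\{e\}} \, f_{s,\{e\}}^{(1)} \;=\; f_{s,\{e\}}.
\]
Summing over $e \in \delta^-(v)$ closes the induction. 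Specializing to $v = s$, flow conservation at the sink gives $f_{s,\delta^-(s)} = 1$, so $\sum_{P \in Q(s)} y_P \geq 1$; the matching upper bound $\sum_{P \in Q(s)} y_P \leq \sum_{e \in \delta^-(s)} y_{\{e\}} \leq 1$ follows from Lemma~\ref{lem:Sum-of-yP-PinQe-at-most-ye} together with the LP constraint $y(\delta^-(s)) \leq 1$.

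\emph{Part (b).} I would apply Theorem~\ref{thm:PropertiesOfLasT}\eqref{item:PropertiesOfLasT-IntOnI} with $I = P'$ to write $Y = \sum_k \lambda_k Z^{(k)}$ where $Z^{(k)} \in \textsc{Las}_{t-|P'|}(K)$ and $z_{\{e\}}^{(k)} \in \{0,1\}$ for every $e \in P'$. Let $K^* := \{k : z_{\{e\}}^{(k)} = 1 \text{ for all } e \in P'\}$; Theorem~\ref{thm:PropertiesOfLasT}\eqref{item:PropertiesOfLasT-Monotone},\eqref{item:PropertiesOfLasT-product} give $y_{P'} = \sum_{k \in K^*} \lambda_k$. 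For any $r$-$s$ path $P \supseteq P'$ and any $k \in K^*$ one has $z_P^{(k)} = z_{P \setminus P'}^{(k)}$ by Theorem~\ref{thm:PropertiesOfLasT}\eqref{item:PropertiesOfLasT-yI1-implies-yIJ-is-yJ}; for $k \notin K^*$, some $e \in P' \subseteq P$ satisfies $z_{\{e\}}^{(k)} = 0$ and hence $z_P^{(k)} = 0$ by monotonicity. Combining,
\[
\sum_{\substack{P \in Q(s)\\ P' \subseteq P}} y_P \;=\; \sum_{k \in K^*} \lambda_k \sum_{\substack{P \in Q(s)\\ P' \subseteq P}} z^{(k)}_P \;\leq\; \sum_{k \in K^*} \lambda_k \sum_{P \in Q(s)} z^{(k)}_P \;=\; \sum_{k \in K^*} \lambda_k \;=\; y_{P'},
\]
where the penultimate equality applies part (a) to each $Z^{(k)}$.

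\emph{Main obstacle.} The most delicate point is the Lasserre level accounting. Part (a) needs to recurse through all $\ell$ levels, consuming one Lasserre round per step, so it requires $t \geq \ell$. Part (b) then applies (a) to $Z^{(k)} \in \textsc{Las}_{t-|P'|}(K)$, forcing $t - |P'| \geq \ell$; since $|P'| \leq \ell$, this dictates $t \geq 2\ell$, which is exactly the choice made in the paper. A minor corner in (a) is the case $u = r$, where the inductive hypothesis degenerates to $y_\emptyset^{(1)} = 1 \geq f_{s,\{e\}}^{(1)}$, holding trivially.
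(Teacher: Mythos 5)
Your proof is correct, but it takes a genuinely different route from the paper's, at least for part (a). The paper proves both parts in one stroke: it applies the Decomposition Theorem (Theorem~\ref{thm:PropertiesOfLasT}.\eqref{item:PropertiesOfLasT-DecThm}) to the whole set $S=\{f_{s,e}\mid e\in E\}$ of flow variables of the fixed terminal (at most $\ell$ of them can be $1$ in any feasible solution), obtaining a convex combination $Y=\sum_{P\in Q(s)}\lambda_P Y^P$ indexed by integral $r$-$s$ paths with $y^P_{P}=1$, $y^P_{P'}=0$ for $P'\in Q(s)\setminus\{P\}$, and $y^P_{P'}=1$ for subpaths $P'\subseteq P$; then (a) and (b) are read off from $\lambda_P=y_P$. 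You instead prove (a) by an induction over the layers using only single-edge conditioning (Theorem~\ref{thm:PropertiesOfLasT}.\eqref{item:PropertiesOfLasT-IntOnI} with $I=\{e\}$), monotonicity, property \eqref{item:PropertiesOfLasT-yI1-implies-yIJ-is-yJ} and flow conservation of the projected solution --- structurally a twin of the paper's Lemma~\ref{lem:Sum-of-yP-PinQe-at-most-ye}, which you then reuse for the matching upper bound --- and you prove (b) by conditioning on the $\leq\ell$ edge variables of $P'$ via \eqref{item:PropertiesOfLasT-IntOnI}. Your level accounting ($t\geq\ell$ for (a), $t-|P'|\geq\ell$ for (b), hence $t=2\ell$) is sound, and the corner cases you flag ($u=r$; implicitly also $y_{\{e\}}=0$, where the inequality is trivial) are handled the same way as in Lemma~\ref{lem:Sum-of-yP-PinQe-at-most-ye}. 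What the paper's argument buys is brevity and the clean probabilistic interpretation $\lambda_P=y_P$; what yours buys is that it avoids the Decomposition Theorem entirely, which is exactly the ingredient the paper's closing Remark singles out as Lasserre-specific --- so your route suggests the lemma may survive in weaker hierarchies that support conditioning on partial assignments, modulo verifying that the remaining properties you invoke hold there. One presentational point: as in the paper's Lemma~\ref{lem:Sum-of-yP-PinQe-at-most-ye}, your induction and your appeal to ``part (a) for $Z^{(k)}$'' must be stated for arbitrary solutions $\bar Y\in\textsc{Las}_{t'}(K)$ with $t'$ at least the relevant level, not just for the fixed optimum $Y$; your write-up uses this implicitly and should say so explicitly.
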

\begin{proof}
Consider the set of variables  $S := \{ f_{s,e} \mid e \in E\}$. If more than $\ell$ of those variables are set to $1$, 
they cannot define a feasible unit flow. In other words, we can apply Theorem~\ref{thm:PropertiesOfLasT}.(\ref{item:PropertiesOfLasT-DecThm}) in order to write $Y = \sum_{H \subseteq E} \lambda_HY^H$ as a convex combination of vectors 
 $Y^H$ such that for all $H$ with $\lambda_H>0$ one has: 
(i) $Y^H \in \textsc{Las}_{\ell}(K)$; (ii) $f_{s,e}^H \in \{ 0,1\}$ for all $e \in E$ and (iii) $f_{s,e}^H = 1 \Leftrightarrow e \in H$  (again we abbreviate  $f_{s,e}^H := Y_{\emptyset,(s,e)}^H$ and $y_{e}^H := Y_{\{e\},\emptyset}^H$).

But the variables $\{ f_{s,e}^H \mid e \in E\}$ can only represent a unit $r$-$s$ flow if $H$ is an $r$-$s$ path
as well. 
Thus we have $\lambda_H = 0$, whenever this is not the case. In other words, our convex combination is 
of the form $Y = \sum_{P \in Q(s)} \lambda_PY^P$.

Using Theorem~\ref{thm:PropertiesOfLasT}.(\ref{item:PropertiesOfLasT-Monotone}) we obtain $y_e^P \leq 1$
for all $e\in P$ and the LP constraints imply $y_e^P \geq f_{s,e}^P = 1$.
We conclude that $y_{e}^P = 1$ for all $e \in P$.
Then Theorem~\ref{thm:PropertiesOfLasT}.(\ref{item:PropertiesOfLasT-yI1-iff-yi1-for-all-i-in-I}) provides $y_P^P = 1$.

Conversely, consider any  $r$-$s$ path  $P' \in Q(s)$ with $P' \neq P$ and let $v \in V(P)$ be a vertex, where
path $P'$ enters $P$, i.e. $P \cap \delta^-(v) = \{ e\}$ and $P' \cap \delta^-(v) = \{ e'\}$
with $e\neq e'$. Since $y_e^P=1$ and $\sum_{e''\in\delta^-(v)} y_{e''}^P \leq 1$ (by LP constraint), 
we have  $y_{e'}^P = 0$ and thus $y_{P'}^P = 0$.
We conclude Claim $a)$, since
\[
y_P = \sum_{\bar{P} \in Q(s)}\lambda_{\bar{P}}  \underbrace{y_P^{\bar{P}}}_{=0\textrm{ if }P\neq\bar{P}} =\lambda_P
\]
and $\sum_{P \in Q(s)} \lambda_P=1$.

To see  $b)$ note that $y_{P'}^P = 1$, whenever $P'\subseteq P$. Thus
\[
y_{P'} = \sum_{P \in Q(s)} \lambda_P y_{P'}^P \geq \sum_{P \in Q(s): P' \subseteq P} \lambda_P = \sum_{P \in Q(s): P' \subseteq P} y_{P}.
\]
\end{proof}


%

In the following, we fix a terminal $s \in X$ and define $Z := |T \cap Q(s)|$ as the random variable 
that yields the number of sampled paths that end in $s$. 
Our goal is to show that $\Pr[Z \geq 1] \geq \Omega(\frac{1}{\ell})$. 
Recall that by Lemma~\ref{lem:SumOver-yPs}.(a) and Lemma~\ref{lem:Pr-P-in-T-is-yP}, we know already that
\begin{corollary}
$E[Z] = 1$.
\end{corollary}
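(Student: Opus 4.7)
The plan is to observe that this corollary is essentially immediate from combining the two preceding lemmas, so no new technical ingredient is needed. The main task is just to carefully assemble the pieces via linearity of expectation.

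First, I would expand $Z$ as an indicator sum by writing $Z = \sum_{P \in Q(s)} \mathbf{1}[P \in T]$, where $\mathbf{1}[\cdot]$ is the indicator that path $P$ is included in the sampled set $T$. By linearity of expectation this gives $E[Z] = \sum_{P \in Q(s)} \Pr[P \in T]$. At this point the rest is plug-and-play.

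Next, I would invoke Lemma~\ref{lem:Pr-P-in-T-is-yP}, which tells us that $\Pr[P \in T] = y_P$ for every $r$-to-$v$ path $P$ (and in particular for every $r$-to-$s$ path). Substituting this yields $E[Z] = \sum_{P \in Q(s)} y_P$. Finally, Lemma~\ref{lem:SumOver-yPs}(a) says precisely that $\sum_{P \in Q(s)} y_P = 1$, so $E[Z] = 1$.

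Since both lemmas have been established in the preceding subsection, there is no real obstacle — the only thing worth being explicit about is that the events $\{P \in T\}$ for distinct paths $P \in Q(s)$ need not be independent or even mutually exclusive, but that is irrelevant for linearity of expectation, which is what makes this corollary so clean. The real work of the analysis (and the place where the Lasserre structure is exploited, via Theorem~\ref{thm:PropertiesOfLasT}(\ref{item:PropertiesOfLasT-DecThm})) has already been done inside Lemma~\ref{lem:SumOver-yPs}. The next step in the paper, namely showing $\Pr[Z \geq 1] \geq \Omega(1/\ell)$, will presumably require a second-moment argument using part (b) of Lemma~\ref{lem:SumOver-yPs}, but that is beyond the scope of this corollary.
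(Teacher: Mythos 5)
Your proof is correct and is exactly the argument the paper intends: the corollary is stated as an immediate consequence of Lemma~\ref{lem:Pr-P-in-T-is-yP} and Lemma~\ref{lem:SumOver-yPs}(a), combined via linearity of expectation over the paths in $Q(s)$. Your added remark that independence of the events $\{P \in T\}$ is irrelevant is a fair (if unnecessary) clarification, but it changes nothing substantive.
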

Interestingly, the key insight of Garg, Konjevod and Ravi~\cite{GroupSteinerTreePolylogApx-GargKonjevodRavi-2000} 
is to prove an \emph{upper} bound on $Z$ in order to \emph{lower} bound $\Pr[Z \geq 1]$.
\begin{lemma} \label{lem:EX-at-most-l-plus-1}
$E[Z \mid Z \geq 1] \leq \ell+1$.
\end{lemma}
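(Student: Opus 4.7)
The plan is to use a second--moment argument. Since $E[Z]=1$, one has $E[Z \mid Z\geq 1] = E[Z]/\Pr[Z\geq 1] = 1/\Pr[Z\geq 1]$, and the Paley--Zygmund/Cauchy--Schwarz inequality gives $\Pr[Z \geq 1] \geq E[Z]^2 / E[Z^2] = 1/E[Z^2]$. Combining, $E[Z \mid Z \geq 1] \leq E[Z^2]$, so it suffices to show
\[
E[Z^2] \;=\; \sum_{P,P' \in Q(s)} \Pr[P \in T \wedge P' \in T] \;\leq\; \ell+1.
\]

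For a pair $P,P' \in Q(s)$, let $P \wedge P'$ denote their longest common initial segment; in the layered DAG this is an $r$-$v$ path $P_v$ for some $v$ on both paths. The rounding procedure samples $P \wedge P'$ first, and then extends it independently to $P$ (via the outgoing edge of $P$ at $v$ and its subsequent extensions) and to $P'$ (via the different outgoing edge of $P'$ at $v$ and its extensions); after the two paths diverge at $v$, the random bits used for their extensions are disjoint, because the algorithm treats the two continuations as separate path objects at each layer. Together with the telescoping identity from Lemma~\ref{lem:Pr-P-in-T-is-yP}, this yields
\[
\Pr[P \in T \wedge P' \in T] \;=\; y_{P\wedge P'} \cdot \frac{y_P}{y_{P\wedge P'}} \cdot \frac{y_{P'}}{y_{P\wedge P'}} \;=\; \frac{y_P\, y_{P'}}{y_{P\wedge P'}}.
\]

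Now fix $P \in Q(s)$ and classify $P' \in Q(s)$ by the endpoint $v \in V(P)$ of $P \wedge P'$, noting $|V(P)| = \ell+1$. Every $P'$ with $P \wedge P' = P_v$ satisfies $P_v \subseteq P'$, and by Lemma~\ref{lem:SumOver-yPs}(b) we obtain
\[
\sum_{P' \in Q(s) : P \wedge P' = P_v} \frac{y_{P'}}{y_{P_v}} \;\leq\; \sum_{P' \in Q(s) : P_v \subseteq P'} \frac{y_{P'}}{y_{P_v}} \;\leq\; 1.
\]
Summing over the $\ell+1$ vertices of $P$ and then using Lemma~\ref{lem:SumOver-yPs}(a) gives
\[
E[Z^2] \;=\; \sum_{P \in Q(s)} y_P \sum_{P' \in Q(s)} \frac{y_{P'}}{y_{P\wedge P'}} \;\leq\; (\ell+1)\sum_{P \in Q(s)} y_P \;=\; \ell+1,
\]
which completes the argument.

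The main obstacle is justifying the product formula $\Pr[P \in T \wedge P' \in T] = y_P y_{P'}/y_{P\wedge P'}$: one has to verify that, conditional on the common prefix being sampled, the extensions toward $P$ and toward $P'$ are genuinely independent. This relies on the fact that the algorithm performs an independent coin flip for \emph{each} outgoing edge at every already-sampled partial path, so branching at $v$ into two distinct outgoing edges launches two independent subprocesses; the subsequent layer-by-layer extensions of the two branches never reuse the same coin flip, even if the underlying DAG vertices coincide again further down.
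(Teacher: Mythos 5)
Your proof is correct and follows essentially the same route as the paper's: your Cauchy--Schwarz/Paley--Zygmund step is exactly the paper's footnote argument (there phrased as $E[Z \mid Z\geq 1] \leq \max_{P} E[Z \mid P \in T]$), and your bound $E[Z^2]\leq(\ell+1)E[Z]$ via the longest-common-prefix classification, conditional independence after divergence, and Lemma~\ref{lem:SumOver-yPs} is the same computation as the paper's bound $E[Z \mid P \in T]\leq \ell+1$. The only cosmetic difference is that you additionally invoke $E[Z]=1$ from the preceding corollary, which the paper's max-over-conditionals formulation does not need.
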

\begin{proof}
Fix a path $P = (e_1,\ldots,e_{\ell}) \in Q(s)$. It suffices to show $E[Z \mid P \in T] \leq \ell+1$.\footnote{
The formal argument works as follows: 
Let $A_1,\ldots,A_m$ be any events (in our application, $A_P$ is
the event ``\emph{$P \in T$, conditioned on $Z\geq1$}'')
and $Z := |\{ i \mid A_i\}|$ the number of occurring events. 
We claim that 
$E[Z] \leq \max_{i \in [m]} E[Z \mid A_i] =: \rho$. 
Proof: Using Jensen's inequality $
E[Z]^2\leq E[Z^2] = \sum_{i,j} \Pr[A_i \cap A_j] 
= \sum_i \Pr[A_i] \cdot \sum_{j} \Pr[A_j \mid A_i]
= \sum_i \Pr[A_i] \cdot E[Z \mid A_i]
\leq \rho \sum_i \Pr[A_i] = \rho E[Z]$. Rearranging yields the claim. 
}
Let  $P_i = (e_1,\ldots,e_i) \subseteq P$ be the $r$-subpath of $P$ containing the first $i$ edges.

Consider any path $P' \in Q(s)$ and say it contains $P_i$, but not $P_{i+1}$. 
Since the probability distribution depends only on the ``joint history'' of $P$ and $P'$, 
we have $\Pr[P' \in T \mid P \in T] = \Pr[P' \in T\mid P_i \in T]$.
We use this to bound
\begin{eqnarray*}
  E[|\{P' \in T\cap Q(s) \mid P_i\subseteq P'; P_{i+1} \nsubseteq P' \}|]
&\leq& \sum_{P' \in Q(s): P'\supseteq P_i} \Pr[P' \in T \mid P_i \in T] \\
&\stackrel{\textrm{cond. prob.}}{=}& \sum_{P' \in Q(s):P'\supseteq P_i} \frac{\Pr[P' \in T\textrm{ and }P_i \in T]}{\Pr[P_i \in T]} \\
&\stackrel{\textrm{Lemma~\ref{lem:Pr-P-in-T-is-yP}}}{=}& \sum_{P' \in Q(s):P'\supseteq P_i} \frac{y_{P'}}{y_{P_i}} \stackrel{\textrm{Lemma~\ref{lem:SumOver-yPs}}}{\leq} 1
\end{eqnarray*}
The claim follows since there are only $\ell+1$ such paths $P_i$.
\end{proof}
Garg-Konjevod-Ravi \cite{GroupSteinerTreePolylogApx-GargKonjevodRavi-2000} 
make use of a sophisticated probabilistic result, the \emph{Janson Inequality} (see e.g. \cite{ProbabilisticMethod-AlonSpencer08}).
However, the desired bound can be achieved much easier: 
\begin{lemma}
$\Pr[Z \geq 1] \geq \frac{1}{\ell+1}$.
\end{lemma}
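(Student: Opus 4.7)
The plan is to combine the two facts already established about $Z$: the unconditional expectation $E[Z]=1$ (from the corollary just before Lemma~\ref{lem:EX-at-most-l-plus-1}) and the conditional upper bound $E[Z \mid Z \geq 1] \leq \ell+1$ (Lemma~\ref{lem:EX-at-most-l-plus-1}). Since $Z$ is a nonnegative integer random variable, conditioning on the complementary event $Z=0$ kills the contribution to the expectation entirely.

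Concretely, I would write
\[
1 \;=\; E[Z] \;=\; E[Z \mid Z \geq 1]\cdot \Pr[Z \geq 1] \;+\; E[Z \mid Z = 0]\cdot \Pr[Z = 0],
\]
observe that the second term vanishes because $Z\geq 0$ forces $E[Z\mid Z=0]=0$, and then plug in the upper bound on the conditional expectation to obtain
\[
1 \;\leq\; (\ell+1)\cdot \Pr[Z \geq 1],
\]
which rearranges to the desired $\Pr[Z\geq1]\geq \frac{1}{\ell+1}$.

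There is no real obstacle here; the only thing to be careful about is that the law of total expectation is being applied cleanly, and that $\Pr[Z\geq 1]>0$ so that the conditional expectation is well defined (which is harmless since otherwise the statement is trivially true with both sides zero, or we simply avoid dividing). This is essentially the standard first-moment trick replacing the more elaborate Janson-inequality argument of Garg--Konjevod--Ravi, and the tightness of the bound matches what one gets when $Z$ is concentrated on the values $0$ and $\ell+1$.
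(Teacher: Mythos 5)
Your proposal is correct and is essentially identical to the paper's own proof: both apply the law of total expectation to $E[Z]=1$, kill the $Z=0$ term, and bound the $Z\geq 1$ term using Lemma~\ref{lem:EX-at-most-l-plus-1}. Your remarks on well-definedness and tightness are fine but not needed.
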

\begin{proof}
By the law of total probability
\[
1 = E[Z] = \Pr[Z=0]\cdot\underbrace{E[Z \mid Z=0]}_{=0} + \Pr[Z \geq 1]\cdot \underbrace{E[Z \mid Z \geq 1]}_{\leq \ell+1\textrm{ by Lem.~\ref{lem:EX-at-most-l-plus-1}}}
\]
thus $\Pr[Z \geq 1] \geq \frac{1}{\ell+1}$.
\end{proof}



Finally, we show the $O(\ell\log |X| )$ integrality gap. Interestingly, though the 
Lasserre solution $Y$ has $n^{\Theta(\ell)}$ entries, we only query a polynomial number of entries $y_P$. 
In other words, if we could evaluate each single entry $y_P$ in polynomial time, the algorithm
would be polynomial as well.
\begin{theorem}
Let  $Y \in \textsc{Las}_{t}(K)$ be a given  $t=2\ell$ round Lasserre solution. 
Then one can compute a feasible solution $H \subseteq E$
with $E[c(H)] \leq O(\ell \log |X|) \cdot \sum_{e \in E} y_{\{e\}}$. The  expected number of Lasserre queries 
and the expected overhead running time 
are both polynomial in $n$.
\end{theorem}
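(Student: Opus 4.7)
The plan is to boost the $\Omega(1/\ell)$ per-terminal success probability of the basic sampling routine by independent repetition and a union bound. Concretely, I would run the sampling procedure of Section~4 independently $K := \lceil 2(\ell+1)\ln(2|X|)\rceil$ times, producing path collections $T_1,\ldots,T_K$, and set $H := \bigcup_{k=1}^{K} E(T_k)$; if $H$ fails to contain an $r$-$s$ path for some terminal $s$ the entire procedure is restarted from scratch, and otherwise $H$ is returned.

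For correctness of one invocation, fix a terminal $s\in X$. The per-run bound $\Pr[s\in V(T_k)]\geq 1/(\ell+1)$ established above, together with independence of the runs, yields
\[
\Pr[s\notin V(H)] \;\leq\; \Bigl(1-\tfrac{1}{\ell+1}\Bigr)^{K} \;\leq\; e^{-K/(\ell+1)} \;\leq\; \tfrac{1}{2|X|}.
\]
A union bound over the $|X|$ terminals then gives $\Pr[H\text{ feasible}]\geq \tfrac{1}{2}$, so the expected number of outer restarts is at most $2$. For the cost, part~(i) of the analysis already showed $\Pr[e\in E(T_k)]\leq y_{\{e\}}$, so by linearity $E[c(H)]\leq K\cdot OPT_f = O(\ell\log|X|)\cdot OPT_f$ per invocation; conditioning on the success event inflates this by a factor at most $1/\Pr[\text{success}]\leq 2$, and summing over the geometric number of restarts preserves the $O(\ell\log|X|)$ factor.

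For the running time, one invocation queries the Lasserre solution only at entries of the form $y_{P\cup\{e\}}$ where $P\in T$ is a partial path and $e$ is an outgoing edge at its endpoint. The expected number of partial paths in $T$ ending at a fixed vertex $v$ is $\sum_{P\in Q(v)}\Pr[P\in T]=\sum_{P\in Q(v)} y_P$, and Lemma~\ref{lem:Sum-of-yP-PinQe-at-most-ye} together with the LP constraint $y(\delta^-(v))\leq 1$ bounds this by $\sum_{e\in\delta^-(v)} y_{\{e\}}\leq 1$. Hence $E[|T|]\leq|V|$, each invocation issues at most $O(n^{2})$ Lasserre queries and performs $O(n^{2})$ additional work in expectation, and combined with the $O(1)$ expected number of restarts the total is polynomial in $n$. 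The delicate step is bounding the expected cost conditional on the success event; this is handled cleanly by the constant lower bound on per-invocation success probability combined with a Markov-type inequality, so no further concentration argument is needed.
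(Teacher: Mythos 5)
Your proposal is correct, and its core is the same as the paper's: repeat the sampling routine $O(\ell\log|X|)$ times, take the union of the sampled edges, bound the cost via $\Pr[e\in E(T)]\leq y_{\{e\}}$, and bound the query count by showing the expected number of partial paths ending at any fixed vertex is at most $1$ (via Lemma~\ref{lem:Sum-of-yP-PinQe-at-most-ye} and the constraint $y(\delta^-(v))\leq 1$), giving $O(n^2)$ expected queries per sample. The one genuine difference is how the residual failure probability is turned into a feasible output. The paper does a deterministic repair: for every terminal left unconnected it buys a cheapest $r$-$s$ path, whose cost is at most $OPT_f$ because the LP routes a unit $r$-$s$ flow of cost at most $OPT_f$; since each terminal fails with probability at most $1/|X|$, the expected repair cost is at most $OPT_f$, and the expectation stays unconditional. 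You instead restart the entire procedure on failure and control the output cost via $E[c(H)\mid \text{success}]\leq E[c(H)]/\Pr[\text{success}]\leq 2K\cdot OPT_f$, which is valid since $c(H)\geq 0$ and the per-invocation success probability is at least $1/2$. This buys you a slightly more self-contained argument (no flow-decomposition fact about cheapest paths is needed), at the price of the conditioning step and of needing a stopping-time/Wald-type remark to bound the total expected running time over the geometric number of restarts; the paper's repair avoids both and certifies feasibility after a single pass. One small imprecision: since you return only the first successful $H$, there is nothing to ``sum over the geometric number of restarts'' for the cost --- the conditional bound alone suffices --- but this does not affect correctness.
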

\begin{proof}
Repeat the sampling algorithm for $2\ell\log |X|$ many times and let $H$ be the union of the sampled paths. 
The probability that a fixed terminal $s \in X$ is not connected is bounded by $(1 - \frac{1}{\ell+1})^{2\ell\log |X|} \leq \frac{1}{|X|}$.
For any terminal $s$ that remains unconnected, we buy the cheapest $r$-$s$ path. The expected cost for
this repair step is bounded by $|X| \cdot \frac{1}{|X|} \cdot \sum_{e \in E} c_{e}y_{\{e\}}$.

Consider a single sample $T$. 
For every node $v \in V$ (no matter whether terminal or not), the expected number of paths $P \in T$ 
connecting $v$ is upper bounded by one. Thus the expected number of sampled partial paths is 
bounded by $n$ (if we denote the total number of nodes on all layers by $n$). After a partial 
path $P$ is sampled, the algorithm queries at most $n$ values of the form $y_{P \cup \{e\}}$.
Thus the total expected number of queries for a single sample is upper bounded by $n^2$.
\end{proof}
Together with Lemma~\ref{thm:lLevelTree} and the fact that $Y$ can be computed in 
time $n^{O(\ell)}$, this provides a polynomial time $|X|^{\varepsilon}$-approximation algorithm
for any constant $\varepsilon > 0$. 
If we choose $\ell = \log |X|$, then we obtain a $O(\log^3 |X|)$ approximation in time $n^{O(\log |X|)}$.

\begin{remark}
Observe that we explicitly used the Decomposition Theorem of~\cite{IntGapLasserre-for-Knapsack-IPCO2011} in the proof of Lemma~\ref{lem:SumOver-yPs}. Since the Decomposition Theorem does not hold 
for the Sherali-Adams or Lovász-Schrijver hierarchy, it is not clear whether the same integrality gap
bound is true for those weaker relaxations. 

However, there is a well-known reduction from a level-$\ell$ instance of {\sc Directed Steiner Tree} 
to a tree instance $F$ of {\sc Group Steiner Tree} such that the produced tree $F$ has size $n^{O(\ell)}$ and 
contains all possible integral DST solutions as subtree. Of course, the corresponding 
{\sc Group Steiner Tree} LP for this instance $F$ has only a polylogarithmic 
integrality gap~\cite{GroupSteinerTreePolylogApx-GargKonjevodRavi-2000} and can also be
interpreted as an LP for {\sc Directed Steiner Tree}.
\end{remark}
%

It remains a challenging open problem, whether there is a convex relaxation with a 
$\textrm{polylog}(|X|)$ integrality gap that can be solved in polynomial time. 
Note that it would in fact suffice, to have a polynomial time oracle that takes a single
path $P\subseteq E$ as input and outputs the Lasserre entry $y_P$.

\paragraph{Acknowledgements.}

The author is very grateful to David Pritchard for carefully reading
a preliminary draft and to Michel X. Goemans, Neil Olver, Rico Zenklusen,
 Mohit Singh and David Steurer for helpful discussions and remarks. 

\bibliographystyle{alpha}
\bibliography{DirectedSteinerTreeViaLasserre}

\appendix

\section{Properties of the Lasserre Hierarchy}

The main goal of this section is to present a complete proof of the 
convergence of the Lasserre hierarchy (Theorem~\ref{thm:PropertiesOfLasT}.\eqref{item:PropertiesOfLasT-Hierarchy}); 
the feasibility of ``conditioned'' solutions (Theorem~\ref{thm:PropertiesOfLasT}.\eqref{item:PropertiesOfLasT-IntOnI})
and the Karlin-Mathieu-Nguyen Decomposition Theorem~\cite{IntGapLasserre-for-Knapsack-IPCO2011} (Theorem~\ref{thm:PropertiesOfLasT}.\eqref{item:PropertiesOfLasT-DecThm}). 
In the following, we are going to reproduce
the proof in \cite{IntGapLasserre-for-Knapsack-IPCO2011}.
However, we will use a different notation and
try to put the emphasis on an intuitive exposition instead of a space efficient one.

Let $\mathcal{P}([n]) := \mathcal{P}_n([n]) = \{ I \mid I \subseteq [n]\}$
be the family of all subsets of $[n]$. Recall that $K = \{ x \in \setR^n \mid Ax \geq b\}$ is the set of relaxed 
solutions with $A \in \setR^{m × n}$ and $b \in \setR^m$.

\subsection{The Inclusion-Exclusion Formula}

Suppose for the moment, that  $y \in \setR^{\mathcal{P}([n])}$ indeed is \emph{consistent}, 
i.e. there exists a random variable $Z \in \{ 0,1\}^n$
such that $\Pr[\bigwedge_{i\in I} (Z_i=1)] = y_I$ (thus  $E[Z_i]=y_i$). 
Here, $Z$ can be taken from any distribution --- especially 
 $Z_i$ and $Z_{i'}$ do not need to be independent. 

Initially, the Lasserre relaxation contains only variables for ``positive'' events
of the form $\bigwedge_{i\in I} (Z_i=1)$. But using the \emph{inclusion-exclusion formula}, one 
can also obtain probabilities for all other events. 

Recall that for any index set $J \subseteq [n]$ the inclusion-exclusion formula says that 
\[
  \Pr\Big[\bigvee_{i\in J} (Z_i = 1)\Big] = \sum_{\emptyset \subset H \subseteq J} (-1)^{|H|+1} \Pr\Big[\bigwedge_{i \in H} (Z_i=1)\Big].
\]
Negating this event yields 
\begin{equation} \label{eq:SimpleInclusionsExclusionsFormula}
  \Pr\Big[\bigwedge_{i \in J} (Z_i=0)\Big] = 1 - \Pr\Big[\bigvee_{i\in J} (Z_i = 1)\Big]
= \sum_{H \subseteq J} (-1)^{|H|} \Pr\Big[\bigwedge_{i \in H} (Z_i=1)\Big]. 
\end{equation}
Next, let $I \subseteq [n]$ be another index set (not necessarily disjoint to $J$). 
Observe that Equation~\eqref{eq:SimpleInclusionsExclusionsFormula} remains valid if
all events are intersected with the same event $\bigwedge_{i \in I} (Z_i=1)$. In other words
we arrive at the \emph{generalized inclusion exclusion formula} (sometimes called \emph{Möbius inversion})
\begin{equation} \label{eq:InklusionsExklusionsFormel}
  \Pr\Big[\bigwedge_{i \in I} (Z_i=1) \land \bigwedge_{i \in J} (Z_i=0)\Big] = \sum_{H \subseteq J} (-1)^{|H|} \Pr\Big[\bigwedge_{i \in I \cup H} (Z_i=1)\Big].
\end{equation}
Thus for any  $I,J \subseteq [n]$ we define
\begin{equation} \label{eq:DefY-I-minus-J}
 y_{I,-J} := \sum_{H \subseteq J} (-1)^{|H|} y_{I \cup H} = \Pr\Big[\bigwedge_{i \in I} (Z_i = 1), \bigwedge_{i \in J} (Z_i=0)\Big]
\end{equation}
(for example $y_{\emptyset,-\{1\}}= y_{\emptyset} - y_{\{1\}}$). 
If $I \dot{\cup} J = [n]$, then we abbreviate $y_{I,-J} =: y_x$ as the probability for the
\emph{atomic event} $x \in \{ 0,1\}^n$ with
\[
  x_i = \begin{cases} 1 & i \in I \\
  0 & i \in J
\end{cases}
\]
Furthermore we denote $\supp(x) := \{ i\in[n] \mid x_i = 1\}$ and $\bsupp(x) := \{ i\in[n] \mid x_i=0\}$. 
We saw so far, that the $2^n$ many probabilities  $y_I$ uniquely define the  $2^n$ many
 probabilities $y_x$ for atomic events. Conversely, one can obtain the values $y_I$ and $y_{I,-J}$ 
by summing over all atomic events that are consistent with the events, i.e.
\begin{eqnarray} 
  y_I &=& \sum_{x \in \{ 0,1\}^n: I \subseteq \supp(x)} y_x  \label{eq:yI-is-sum-of-yx} \\
 y_{I,-J} &=& \sum_{x \in \{ 0,1\}^n: I \subseteq \supp(x), J \subseteq \bsupp(x)} y_x. \label{eq:yIminusJ-is-sum-of-yx}
\end{eqnarray}
Let us make a couple of observations: 
\begin{itemize}
\item Equations~\eqref{eq:DefY-I-minus-J} and \eqref{eq:yI-is-sum-of-yx} are both linear, 
thus they define an isomorphism between $(y_I)_{I\subseteq[n]}$ and $(y_{x})_{x \in \{ 0,1\}^n}$.
This isomorphism is well defined even if $y$ is not consistent (i.e. even if
some $y_x$ are negative or $\sum_{x \in \{ 0,1\}^n} y_x \neq 1$).
\item If $I \cap J \neq \emptyset$, then by definition  one has 
$y_{I,-J} = 0$ since the sum in Eq.~\eqref{eq:DefY-I-minus-J} can be grouped into pairs
that have the same absolute value -- but different signs
(for example $y_{\{1\},-\{1,2\}} = y_{\{1\}\cup\emptyset} - y_{\{1\}\cup\{1\}} - y_{\{1\}\cup\{2\}} + y_{\{1\} \cup \{1,2\}}=0$).
\end{itemize}

\begin{remark}
Lemma~2 in the survey of Laurent~\cite{SDP-hierarchies-Survey-Laurent2003} states
the following equivalences for $y \in \setR^{\mathcal{P}([n])}$: 
\begin{itemize}
\item $M_n(y) \succeq 0 \; \Longleftrightarrow \; \forall x \in \{ 0,1\}^n: y_x \geq 0$
\item $M_n(\tchoose{A_{\ell}}{b_{\ell}} * y) \succeq 0 \; \Longleftrightarrow \; \forall x\in\{ 0,1\}^n: y_x \cdot (A_{\ell}x - b_{\ell}) \geq 0$
\end{itemize}
If both conditions hold and additionally $y_{\emptyset}=1$, then  $\sum_{x \in \{ 0,1\}^n} y_x = y_{\emptyset} = 1$ and the $y_x$ 
define a probability distribution over $\{ 0,1\}^n$ with $y_x=0$ for all $x$ with 
 $Ax \ngeq b$.
In other words
\[
 (y_{\{1\}},\ldots,y_{\{n\}}) = \sum_{x \in \{ 0,1\}^n: Ax \geq b} y_x \cdot x
\]
is a convex combination of feasible points.
However, we will show the convergence proof following \cite{IntGapLasserre-for-Knapsack-IPCO2011}, which has
more synergy effects with the decomposition theorem. 
\end{remark}

\subsection{Partial assignments and the inversion formula}

Let $\mathcal{T} \subseteq \mathcal{P}([n])$ be a family of index sets and let  $y \in \setR^{\mathcal{T}}$ be a corresponding vector. 
For subsets  $X \subseteq S \subseteq [n]$, we define the \emph{conditioning on $X$ and $-S\backslash X$} (sometimes also called \emph{partial assignment}) as the 
vector  $z = \{ y \}_{X, -S\backslash X} \in \setR^{\mathcal{T} \ominus S}$ with  $z_I := y_{I \cup X, -S\backslash X} = \sum_{H \subseteq S \backslash X} (-1)^{|H|}y_{I\cup X\cup H}$ and
$\mathcal{T} \ominus S := \{ I \in \mathcal{T} \mid \forall J\subseteq S: I \cup J \in \mathcal{T} \}$.
The definition of entry $z_I$ only makes sense, if $I \cup J \in \mathcal{T}$ for all $J \subseteq S$, 
thus $\mathcal{T} \ominus S$ is the maximum family of sets, for which this is satisfied.
Note that the set $\mathcal{T} \ominus S$ is not necessarily smaller
than $\mathcal{T}$. For example $\mathcal{P}([n]) \ominus S = \mathcal{P}([n])$ and $\mathcal{P}_t([n]) \ominus S \subseteq \mathcal{P}_{t - |S|}([n])$.

We define the \emph{normalized conditioning on $X$ and $-S\backslash X$} as $w:= \frac{z}{z_{\emptyset}}$ 
(say $w := \mathbf{0}$ if $z_{\emptyset} = 0$ to be well-defined). 
The intuition is that if $Z \in \{ 0,1\}^n$ again is a random variable with $y_{I,-J} = \Pr[\bigwedge_{i\in I} (Z_i=1), \bigwedge_{i\in J} (Z_i=0)]$,
then the (normalized) conditioned solution reflects conditional probabilities, i.e.
\[
w_{I,-J} = \Pr\Big[\bigwedge_{i\in I} (Z_i=1), \bigwedge_{i\in J} (Z_i=0) \mid \bigwedge_{i\in X} (Z_i=1), \bigwedge_{i\in S\backslash X} (Z_i=0)\Big].
\]

The events $(X, - S\backslash X)$ obviously partition the probability space, if $X$ runs over all subsets of $S$. 
This remains valid for conditioned Lasserre solutions. 

\begin{lemma}[Inversion formula]  \label{lem:InversionFormula}
Let $y \in \setR^{\mathcal{P}([n])}$ and  $S \subseteq [n]$. Then  $y = \sum_{X \subseteq S} \{ y \}_{X,-S\backslash X}$.
\end{lemma}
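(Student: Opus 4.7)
The plan is to verify the identity entry-wise, i.e., to show that for every $I \subseteq [n]$,
\[
y_I \;=\; \sum_{X \subseteq S} \{y\}_{X,-S\setminus X,\, I} \;=\; \sum_{X \subseteq S} \sum_{H \subseteq S\setminus X} (-1)^{|H|}\, y_{I \cup X \cup H},
\]
where the second equality is just the definition of the conditioning operation applied to coordinate $I$. This reduces the lemma to a purely combinatorial (sign) identity on the coefficients of $y_{I\cup T}$ for $T \subseteq S$.

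The key observation is that because $H \subseteq S \setminus X$, the sets $X$ and $H$ are automatically disjoint, so $T := X \cup H$ is a disjoint union inside $S$ with $|T| = |X| + |H|$. So I will swap the order of summation and group terms by $T$: for a fixed $T \subseteq S$, the pairs $(X,H)$ with $X \subseteq S$, $H \subseteq S\setminus X$, and $X \cup H = T$ are exactly the pairs of the form $(X, T\setminus X)$ with $X \subseteq T$. Hence the coefficient of $y_{I\cup T}$ in the right-hand side is
\[
\sum_{X \subseteq T} (-1)^{|T \setminus X|} \;=\; \sum_{k=0}^{|T|} \binom{|T|}{k} (-1)^{|T|-k} \;=\; (1-1)^{|T|}.
\]

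By the binomial theorem this is $0$ whenever $T \neq \emptyset$ and equals $1$ when $T = \emptyset$. So the whole double sum collapses to the single term $y_{I \cup \emptyset} = y_I$, which is exactly what we wanted. Since this holds for every $I$, the two vectors coincide and the lemma is proved.

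There is essentially no obstacle here; the only subtlety is to notice the disjointness of $X$ and $H$ so that the regrouping by $T = X \cup H$ is unambiguous. Once that is in place, the identity is a one-line application of $(1-1)^m = 0$ for $m \geq 1$. The formula should be viewed as the Lasserre analogue of the trivial probabilistic identity $\Pr[A] = \sum_{X \subseteq S} \Pr[A \cap \bigwedge_{i\in X}(Z_i=1) \wedge \bigwedge_{i\in S\setminus X}(Z_i=0)]$, which is exactly the total probability decomposition over the $2^{|S|}$ atoms generated by $S$; the proof just checks that the moment-style encoding respects this decomposition term by term.
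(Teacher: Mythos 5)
Your proof is correct, but it follows a different route than the paper. You work directly from the definition $\{y\}_{X,-S\setminus X,\,I}=\sum_{H\subseteq S\setminus X}(-1)^{|H|}y_{I\cup X\cup H}$, swap the summations, group the pairs $(X,H)$ by the disjoint union $T=X\cup H\subseteq S$, and kill every $T\neq\emptyset$ with the binomial identity $\sum_{X\subseteq T}(-1)^{|T\setminus X|}=(1-1)^{|T|}$; the regrouping is legitimate precisely because $X$ and $H$ are forced to be disjoint, as you note (and even if distinct $T$ happen to yield the same set $I\cup T$, the cancellation is per fixed $T$, so nothing is lost). The paper instead passes through the atomic representation: it expands each term $y_{I\cup X,-S\setminus X}$ as a sum of atomic values $y_x$ over $x\in\{0,1\}^n$ with $I\cup X\subseteq\supp(x)$ and $S\setminus X\subseteq\bsupp(x)$ (its Eq.~\eqref{eq:yIminusJ-is-sum-of-yx}), exchanges sums, and observes that each such $x$ is compatible with exactly one $X\subseteq S$, namely $X=S\cap\supp(x)$, so the total collapses to $\sum_{x:\,I\subseteq\supp(x)}y_x=y_I$. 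Your version is more self-contained, since it needs only the defining inclusion--exclusion formula and a one-line sign computation, whereas the paper's version leans on the previously set-up isomorphism between the moment vector $(y_I)$ and the atomic vector $(y_x)$; the payoff of the paper's route is that it makes the probabilistic reading (law of total probability over the $2^{|S|}$ atoms generated by $S$) completely explicit and reuses machinery that also drives the convergence and decomposition arguments later in the appendix. Either argument proves the lemma in full.
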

\begin{proof}
We verify the equation for entry $I \subseteq V$: 
\begin{eqnarray*}
\sum_{X \subseteq S} y_{I \cup X, -S \backslash X} &=& \sum_{X \subseteq S} \sum_{x \in \{ 0,1\}^n: I \cup X \subseteq \supp(x), S \backslash X \subseteq \bsupp(x)} y_x \\
&=& \sum_{x \in \{ 0,1\}^n: I \subseteq \supp(x)} y_x \cdot \underbrace{|\{ X \subseteq S: X \subseteq \supp(x), S \backslash X \subseteq \bsupp(x) \}|}_{= 1} = y_I
\end{eqnarray*}
\end{proof}

\subsection{Feasibility of conditioned solutions}

Next, we will see that conditioned solutions are still feasible on a smaller family of index sets: 
\begin{lemma} \label{lem:MYPsd-impliziert-MZpsd}
Let $X \subseteq S \subseteq V$, 
and $y \in \setR^{\mathcal{P}([n])}$ with  $\mathcal{T} \subseteq \mathcal{P}(V)$. Then
\[
  M_{\mathcal{T}}(y) \succeq 0 \quad \Longrightarrow \quad M_{\mathcal{T}\ominus S}(\{y\}_{X,-S\backslash X}) \succeq 0
\]
\end{lemma}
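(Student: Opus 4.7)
My plan is to realise $M_{\mathcal{T}\ominus S}(z)$, where $z := \{y\}_{X,-S\setminus X}$, as a congruence transform of $M_{\mathcal{T}}(y)$, from which positive semi-definiteness is immediate. Setting $T := S\setminus X$, I will introduce the auxiliary polynomial $q(x) := \prod_{i\in X} x_i \prod_{j \in T}(1-x_j)$; expanding the product gives $q(x) = \sum_K q_K \prod_{i\in K} x_i$ with $q_K = (-1)^{|K\setminus X|}$ when $X \subseteq K \subseteq S$ and $q_K = 0$ otherwise. The defining formula $z_I = \sum_{H\subseteq T}(-1)^{|H|} y_{I\cup X\cup H}$ then reads $z_I = \sum_K q_K\, y_{I\cup K}$, so conditioning is exactly convolution by $q$.

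The algebraic lever I will use is that $q$ is idempotent modulo $x_i^2 - x_i$: writing $(q\cdot q)_M := \sum_{K_1 \cup K_2 = M} q_{K_1} q_{K_2}$ for the coefficients of the polynomial product reduced by $x_i^2 = x_i$, one has $(q\cdot q)_M = q_M$ for every $M\subseteq V$. Conceptually this is because $q$ is the $\{0,1\}$-indicator of the event $\{x_X \equiv 1,\, x_T \equiv 0\}$ on $\{0,1\}^V$, so $q^2 = q$ pointwise; concretely, for $X\subseteq M\subseteq S$ a short inclusion-exclusion over pairs $(K_1\setminus X, K_2\setminus X)$ summing to $M\setminus X$ verifies $(q\cdot q)_M = (-1)^{|M\setminus X|} = q_M$.

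With this in hand, for an arbitrary $v \in \setR^{\mathcal{T}\ominus S}$ I will define $h \in \setR^{\mathcal{T}}$ by
\[
  h_L \;:=\; \sum_{\substack{I \in \mathcal{T}\ominus S,\, K:\\ I\cup K = L}} v_I\, q_K,
\]
the coefficient vector of $\bigl(\sum_I v_I \prod_{i\in I}x_i\bigr)\cdot q(x)$ reduced modulo $x_i^2 - x_i$. The crucial well-definedness check is that $h$ really sits in $\setR^{\mathcal{T}}$: any nonzero term forces $I \in \mathcal{T}\ominus S$ and $K\subseteq S$, and then $I\cup K \in \mathcal{T}$ by the very definition $\mathcal{T}\ominus S = \{I\in\mathcal{T} : I\cup J \in \mathcal{T}\ \forall J\subseteq S\}$. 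A direct expansion, inserting the idempotency $q_K = (q\cdot q)_K$ in the middle equality, then yields
\begin{align*}
  v^T M_{\mathcal{T}\ominus S}(z)\, v
   &= \sum_{I,J\in\mathcal{T}\ominus S,\, K} v_I v_J\, q_K\, y_{I\cup J\cup K} \\
   &= \sum_{I,J,K_1,K_2} v_I v_J\, q_{K_1} q_{K_2}\, y_{I\cup J\cup K_1\cup K_2} \\
   &= h^T M_{\mathcal{T}}(y)\, h \;\geq\; 0,
\end{align*}
and since $v$ was arbitrary this establishes $M_{\mathcal{T}\ominus S}(z) \succeq 0$.

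The hard part will not be the idempotency identity (routine combinatorics) nor the expansion (a straightforward re-bracketing of the sum), but the bookkeeping: one must be careful that the polynomial product $f_v \cdot q$ is reduced modulo $x_i^2 = x_i$, so that $I\cup K$ in the definition of $h_L$ is an honest set, and one must verify that $h$ lands inside $\mathcal{T}$ rather than some larger family. It is precisely the asymmetric definition of $\mathcal{T}\ominus S$ (requiring $I\cup J \in \mathcal{T}$ for \emph{every} $J\subseteq S$) that makes this work.
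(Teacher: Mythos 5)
Your proof is correct and is essentially the paper's argument: your map $v \mapsto h$ (multiplication by the indicator polynomial $q$) is exactly the congruence that the paper implements through Gram vectors, defining $w_I := \sum_{H \subseteq S\setminus X} (-1)^{|H|} v_{I\cup X\cup H}$ from a factorization $v_I\cdot v_J = y_{I\cup J}$, and your idempotency identity $(q\cdot q)_M = q_M$ is the same sign-cancellation the paper invokes to collapse the double sum over $H,L \subseteq S\setminus X$. The only difference is presentational: you transport positive semi-definiteness via quadratic forms directly (so you never need the existence of a Gram factorization), which is a slightly more self-contained way of writing the identical computation, including the same use of $\mathcal{T}\ominus S$ to ensure $I\cup K \in \mathcal{T}$.
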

\begin{proof}
Abbreviate $z := \{ y_{X,-S\backslash X}\} \in \setR^{\mathcal{P}([n])}$. 
Equivalently to $M_{\mathcal{T}}(y) \succeq 0$, there must be vectors $v_I$  with $v_I\cdot v_J = y_{I \cup J}$ for $I,J \in \mathcal{T}$.
For each $I \in \mathcal{T} \ominus S$, we define another vector
\[
 w_I := \sum_{H \subseteq S \backslash X} (-1)^{|H|} v_{I \cup X \cup H}.
\]
We claim that those vectors provide a factorization of  $M_{\mathcal{T} \ominus S}(z)$ (which proves the positive semi-definiteness of $M_{\mathcal{T} \ominus S}(z)$)
\begin{eqnarray*}
w_I\cdot w_J &=& \sum_{H \subseteq S \backslash X} \sum_{L \subseteq S \backslash X} (-1)^{|H|+|L|} v_{I \cup X \cup H}\cdot v_{J \cup X \cup L} \\
&=& \sum_{H \subseteq S \backslash X} \underbrace{\sum_{L \subseteq S \backslash X} (-1)^{|H|+|L|} y_{I \cup J \cup X \cup H \cup L}}_{(*) = 0\textrm{ if }H \neq \emptyset} \\
&=& \sum_{L \subseteq S \backslash X}  (-1)^{|L|} y_{I \cup J \cup X \cup L} =  y_{I \cup J \cup X, -S \backslash X} = z_{I \cup J}.
\end{eqnarray*}
To see $(*)$, observe that if there is any $i \in H \subseteq S\backslash X$, then the sum contains 
the same term for 
 $L \subseteq (S\backslash X)\backslash\{i\}$ and $L\cup\{i\}$ -- just with different sign, so that the sum evaluates to $0$.  
\end{proof}

The following lemma is usually called \emph{commutativity of the shift operator} (see e.g. \cite{SDP-hierarchies-Survey-Laurent2003}). 
Recall that for  $y \in \setR^{\mathcal{T}}$, we interpret $w = \tchoose{a}{\beta} * y$ as the vector %
with $w_I := (\sum_{i\in[n]} a_iy_{I\cup\{i\}}-\beta y_I) $.\footnote{In fact, we were sloppy concerning the dimension of $w$ so far. Formally, one should define $w \in \setR^{\mathcal{T}'}$ with $\mathcal{T}' := \{ I \mid I \cup \{i\} \in \mathcal{T} \; \forall i \in [n]\}$. }  
\begin{lemma} \label{lem:CommutativityOfShiftOp}
Let $y \in \setR^{\mathcal{T}}$; $X \subseteq S \subseteq [n]$; 
and $a^Tx \geq \beta$ be a linear constraint. Then
\[
  \tchoose{a}{\beta} * \underbrace{\{y\}_{X,-S\backslash X}}_{=:z} = \{\underbrace{\tchoose{a}{\beta} * y}_{=:u}\}_{X,-S\backslash X}   \quad (\in \setR^{\mathcal{T}'})
\]
with $\mathcal{T}' := \{ I \subseteq [n] \mid I \cup J \cup \{i\} \in \mathcal{T} \; \forall J \subseteq S \; \forall i \in [n]\}$.
\end{lemma}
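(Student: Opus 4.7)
The plan is to verify the claimed identity entrywise. Fix an index $I \in \mathcal{T}'$. The definition of $\mathcal{T}'$ is engineered precisely so that every index of the form $I \cup X \cup H \cup \{i\}$ with $H \subseteq S\backslash X$ and $i \in [n]$ lies in $\mathcal{T}$; this guarantees that all quantities appearing in the calculation below are well-defined. The strategy is then to unfold both sides using the definitions and observe that the shift operator $\tchoose{a}{\beta} * \cdot$ and the partial assignment $\{\,\cdot\,\}_{X,-S\backslash X}$ are two finite linear operations on $\setR^{\mathcal{T}}$ that act on disjoint ``degrees of freedom'' of each index, and therefore commute by a trivial Fubini-type interchange of summations.

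Concretely, I would first set $z := \{y\}_{X,-S\backslash X}$, so that $z_J = \sum_{H \subseteq S\backslash X}(-1)^{|H|}\,y_{J \cup X \cup H}$ by definition. Substituting into the shift operator and pulling the outer alternating sum over $H$ outside yields
\[
(\tchoose{a}{\beta} * z)_I \;=\; \sum_{i \in [n]} a_i\, z_{I \cup \{i\}} \;-\; \beta\, z_I \;=\; \sum_{H \subseteq S\backslash X} (-1)^{|H|}\Big(\sum_{i \in [n]} a_i\, y_{I \cup \{i\} \cup X \cup H} \;-\; \beta\, y_{I \cup X \cup H}\Big).
\]
For the right-hand side I would instead set $u := \tchoose{a}{\beta} * y$, so that $u_J = \sum_i a_i\, y_{J \cup \{i\}} - \beta\, y_J$, and then apply the partial assignment to obtain
\[
\{u\}_{X,-S\backslash X,\,I} \;=\; \sum_{H \subseteq S\backslash X} (-1)^{|H|}\, u_{I \cup X \cup H} \;=\; \sum_{H \subseteq S\backslash X} (-1)^{|H|}\Big(\sum_{i \in [n]} a_i\, y_{I \cup X \cup H \cup \{i\}} \;-\; \beta\, y_{I \cup X \cup H}\Big).
\]
The two displays are termwise identical, which proves the lemma entrywise and hence on all of $\setR^{\mathcal{T}'}$.

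There is really no obstacle here beyond clean index bookkeeping: the shift operator augments each index with a single element $\{i\}$, while the partial assignment augments it with $X$ and performs an alternating sum over $H \subseteq S\backslash X$; since these two modifications touch disjoint pieces of the index, the two linear operations visibly commute. The only point that requires care is verifying that every entry of $y$ appearing in the calculation is indeed in $\mathcal{T}$, and this is exactly what the hypothesis $I \in \mathcal{T}'$ is designed to encode.
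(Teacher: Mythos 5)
Your proof is correct and follows essentially the same route as the paper: expand both sides entrywise at an index $I \in \mathcal{T}'$, interchange the alternating sum over $H \subseteq S\backslash X$ with the linear shift, and observe the resulting expressions coincide term by term. The well-definedness remark about $\mathcal{T}'$ matches the paper's (implicit) treatment as well.
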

\begin{proof}
Let $z_I = y_{I \cup X,-S\backslash X}$ and $u_I = \sum_{i \in [n]} a_iy_{I\cup\{i\}}-y_I\beta$. 
Evaluating the left hand side vector at entry $I\in \mathcal{T}'$ gives
\begin{eqnarray*}
  (\tchoose{a}{\beta} * \{y\}_{X,-S\backslash X})_I &=& \sum_{i\in[n]} a_iz_{I \cup\{i\}} -\beta z_I
= \sum_{i\in[n]} a_iy_{I \cup X \cup \{ i\}, -S \backslash X}-\beta y_{I\cup X,-S\backslash X} \\
&=& \sum_{H \subseteq S \backslash X} (-1)^{|H|} \Big[\sum_{i\in [n]} a_i y_{I \cup X \cup \{ i\} \cup H} - \beta y_{I\cup X\cup H}\Big]
\end{eqnarray*}
The right hand side entry for $I$ is 
\begin{eqnarray*}
  u_{I \cup X, -S\backslash X} &=& \sum_{H \subseteq S\backslash X} (-1)^{|H|} u_{X \cup I \cup H} \\
&=&  \sum_{H \subseteq S\backslash X} (-1)^{|H|} \Big[\sum_{i\in[n]} a_iy_{X \cup I \cup \{i\} \cup H}- \beta y_{I\cup X\cup H}\Big]
\end{eqnarray*}
Both expressions are identical and the claim follows.
\end{proof}

Now we will see that normalized conditioned solutions are feasible and integral on variables in $S$: 
\begin{lemma}  \label{lem:InducedSolutionFeasible}
Let $X \subseteq S \subseteq [n]$, $\mathcal{T}_1,\mathcal{T}_2 \subseteq \mathcal{P}([n])$ 
and $y \in \setR^{\mathcal{P}([n])}$ with
\[
  M_{\mathcal{T}_1}(y) \succeq 0; \quad M_{\mathcal{T}_2}(\tchoose{A_{\ell}}{b_{\ell}} * y) \succeq 0 \; \forall \ell\in[m]; \quad y_{\emptyset} = 1.
\]
Define $z := \{ y\}_{X,-S\backslash X} \in \setR^{\mathcal{P}([n])}$. If $z_{\emptyset} > 0$, then for $w := \frac{z}{z_{\emptyset}}$ one has
\begin{itemize*}
\item $M_{\mathcal{T}_1 \ominus S}(w) \succeq 0; \quad M_{\mathcal{T}_2 \ominus S}(\tchoose{A_{\ell}}{b_{\ell}} * w) \succeq 0 \; \forall \ell\in[m]; \quad w_{\emptyset} = 1.$
\item $w_I = 1$ if $I \subseteq X$
\item $w_I = 0$ if $I \subseteq S$ but $I \cap (S\backslash X) \neq \emptyset$.
\end{itemize*}
\end{lemma}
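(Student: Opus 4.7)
The proof plan naturally splits into the three bulleted claims, and the whole argument is essentially bookkeeping on top of the two preceding lemmas.

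First I would handle the PSD conditions. Applying Lemma~\ref{lem:MYPsd-impliziert-MZpsd} directly to $M_{\mathcal{T}_1}(y) \succeq 0$ yields $M_{\mathcal{T}_1 \ominus S}(z) \succeq 0$, and scaling by the positive scalar $1/z_{\emptyset}$ preserves positive semi-definiteness, so $M_{\mathcal{T}_1 \ominus S}(w) \succeq 0$. For the shifted moment matrices, I would first use the commutativity of the shift operator (Lemma~\ref{lem:CommutativityOfShiftOp}) to write $\tchoose{A_\ell}{b_\ell} * z = \{\tchoose{A_\ell}{b_\ell} * y\}_{X,-S\backslash X}$, and then apply Lemma~\ref{lem:MYPsd-impliziert-MZpsd} to the vector $\tchoose{A_\ell}{b_\ell}*y$ (whose moment matrix is PSD by hypothesis) to conclude $M_{\mathcal{T}_2 \ominus S}(\tchoose{A_\ell}{b_\ell} * z) \succeq 0$. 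Scaling by $1/z_{\emptyset} > 0$ again gives the analogous statement for $w$. The normalization $w_{\emptyset} = z_{\emptyset}/z_{\emptyset} = 1$ is immediate.

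Next, for the integrality statement $w_I = 1$ when $I \subseteq X$: unfolding the definition, $z_I = y_{I \cup X, -S\backslash X}$. Since $I \subseteq X$ we have $I \cup X = X$, so $z_I = y_{X,-S\backslash X} = z_{\emptyset}$, and dividing gives $w_I = 1$.

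Finally, for $w_I = 0$ when $I \subseteq S$ and $I \cap (S\backslash X) \neq \emptyset$: here I want to invoke the earlier observation that $y_{I',-J'} = 0$ whenever $I' \cap J' \neq \emptyset$, which follows by pairing up terms in the inclusion-exclusion expansion~\eqref{eq:DefY-I-minus-J} that differ only by the inclusion of a fixed element of $I' \cap J'$ and hence cancel. Applied with $I' = I \cup X$ and $J' = S \backslash X$, any index $i \in I \cap (S\backslash X)$ lies in both sets, so $z_I = y_{I \cup X, -S\backslash X} = 0$ and thus $w_I = 0$.

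The only subtlety in the plan is making sure the intended index families actually contain the sets used above, so that every expression such as $y_{I \cup X \cup H}$ and every conditioning step is legal. This is exactly why $\mathcal{T}_i \ominus S$ is defined as the largest family $\mathcal{T}'$ with $I \cup J \in \mathcal{T}_i$ for all $J \subseteq S$; restricting the statement to $I \in \mathcal{T}_i \ominus S$ makes the $y_{I \cup X \cup H}$ appearing in the inclusion-exclusion sum well-defined, and Lemma~\ref{lem:CommutativityOfShiftOp} is stated with precisely the compatible family for the shifted setting, so there is no real obstacle beyond threading these domain conditions carefully through the argument.
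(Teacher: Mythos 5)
Your proposal is correct and follows essentially the same route as the paper: Lemma~\ref{lem:MYPsd-impliziert-MZpsd} combined with the commutativity of the shift operator (Lemma~\ref{lem:CommutativityOfShiftOp}) handles the two PSD conditions, and the entries with $I \subseteq S$ are evaluated directly from $z_I = y_{I\cup X,-S\backslash X}$. The only cosmetic difference is that you spell out the inclusion-exclusion cancellation showing $y_{I\cup X,-S\backslash X}=0$ when $I\cap(S\backslash X)\neq\emptyset$, which the paper asserts in its case distinction without further comment (relying on the earlier observation that $y_{I',-J'}=0$ whenever $I'\cap J'\neq\emptyset$).
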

\begin{proof}
Lemma~\ref{lem:MYPsd-impliziert-MZpsd} directly implies that $M_{\mathcal{T}_1 \ominus S}(w) = \frac{1}{z_{\emptyset}} M_{\mathcal{T}_1 \ominus S}(z) \succeq 0$. Applying the commutativity rule,
\begin{eqnarray*}
  M_{\mathcal{T}_2 \ominus S}(\tchoose{A_{\ell}}{b_{\ell}} * w) &=& \frac{1}{z_{\emptyset}} M_{\mathcal{T}_2 \ominus S}(\tchoose{A_{\ell}}{b_{\ell}} * \{ y\}_{X,-S\backslash X}) \\
&\stackrel{\textrm{Lemma~\ref{lem:CommutativityOfShiftOp}}}{=}& \frac{1}{z_{\emptyset}} M_{\mathcal{T}_2 \ominus S}(\{ \tchoose{A_{\ell}}{b_{\ell}} *  y\}_{X,-S\backslash X}) \stackrel{\textrm{Lemma~\ref{lem:MYPsd-impliziert-MZpsd}}}{\succeq}  0
\end{eqnarray*}
using that $M_{\mathcal{T}_2}( \tchoose{A_{\ell}}{b_{\ell}} *  y ) \succeq 0$. Furthermore for $I \subseteq S$ one has
\[
w_{I} = \frac{z_{I}}{z_{\emptyset}} = \frac{y_{X\cup I,-S\backslash X}}{y_{X, -S\backslash X}} = \begin{cases} 1 & I \subseteq X \\ 0 & I \cap (S \backslash X) \neq \emptyset \end{cases}.
\]
\end{proof}

\subsection{Convergence}

With the last lemma at hand, the convergence of the Lasserre hierarchy follows quickly.
\begin{lemma}
$K \supseteq \textsc{Las}_0^{\textrm{proj}}(K) \supseteq \ldots \supseteq \textsc{Las}_n^{\textrm{proj}}(K) \supseteq \conv(K \cap \{ 0,1\}^n)$.
\end{lemma}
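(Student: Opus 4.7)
The plan is to check the four links in the chain separately; three are essentially formal, and the only substantive content lies in the final inclusion $\textsc{Las}_n^{\textrm{proj}}(K) \subseteq \conv(K \cap \{0,1\}^n)$, which I would derive from the inversion formula (Lemma~\ref{lem:InversionFormula}) together with Lemma~\ref{lem:InducedSolutionFeasible}.

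First, $\textsc{Las}_0^{\textrm{proj}}(K) \subseteq K$: for $y \in \textsc{Las}_0(K)$ the constraint $M_0(\tchoose{A_\ell}{b_\ell} * y) \succeq 0$ is just a $1\times 1$ PSD condition, which reads $\sum_i A_{\ell,i} y_{\{i\}} - b_\ell \geq 0$, i.e.\ the $\ell$-th LP inequality for the projection. Second, $\textsc{Las}_{t+1}^{\textrm{proj}}(K) \subseteq \textsc{Las}_t^{\textrm{proj}}(K)$: given $y \in \textsc{Las}_{t+1}(K)$, restrict it to the index family $\mathcal{P}_{2t+2}([n])$; the matrices $M_{t+1}(\cdot)$ and $M_t(\tchoose{A_\ell}{b_\ell} * \cdot)$ are then principal submatrices of the corresponding level-$(t+1)$ matrices, hence remain PSD, and the projection onto the singleton variables is unchanged.

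Third, $\conv(K \cap \{0,1\}^n) \subseteq \textsc{Las}_n^{\textrm{proj}}(K)$: for each $x \in K \cap \{0,1\}^n$, set $y^x_I := \prod_{i\in I} x_i$. Then $M_{n+1}(y^x) = vv^T$ with $v_I = \prod_{i\in I} x_i$, hence PSD; and $M_n(\tchoose{A_\ell}{b_\ell} * y^x) = (A_\ell x - b_\ell)\cdot vv^T \succeq 0$ since $A_\ell x \geq b_\ell$. Convex combinations preserve these PSD conditions, so any point of $\conv(K \cap \{0,1\}^n)$ lifts to a valid level-$n$ solution.

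The main step is the reverse inclusion $\textsc{Las}_n^{\textrm{proj}}(K) \subseteq \conv(K \cap \{0,1\}^n)$. Given $y \in \textsc{Las}_n(K)$, I would apply the inversion formula with $S=[n]$ to decompose $y = \sum_{X \subseteq [n]} z^X$, where $z^X := \{y\}_{X, -[n]\backslash X}$. By Lemma~\ref{lem:MYPsd-impliziert-MZpsd} the matrix $M_{n+1}(z^X)$ is PSD, so its diagonal entry $z^X_{\emptyset}$ is nonnegative; moreover $\sum_X z^X_\emptyset = y_\emptyset = 1$, so the coefficients $\lambda_X := z^X_\emptyset$ form a convex combination. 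For those $X$ with $\lambda_X > 0$, Lemma~\ref{lem:InducedSolutionFeasible} shows that the normalization $w^X := z^X/\lambda_X$ lies in $\textsc{Las}_n(K)$ with $w^X_{\{i\}} = 1$ for $i \in X$ and $w^X_{\{i\}} = 0$ otherwise; by the first inclusion already established, the integer point $\chi_X = (w^X_{\{i\}})_i$ lies in $K \cap \{0,1\}^n$. Projecting the decomposition to the singletons gives $(y_{\{1\}},\ldots,y_{\{n\}}) = \sum_{X} \lambda_X\, \chi_X \in \conv(K \cap \{0,1\}^n)$, as required. The only subtle point is verifying nonnegativity of the $\lambda_X$ and that each normalized piece is truly integer-feasible on the full ground set; both are immediate from the two lemmas, so there is no real obstacle once the machinery of the previous subsections is in place.
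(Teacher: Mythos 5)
Your proof of the three inclusions the lemma actually asserts --- $\textsc{Las}_0^{\textrm{proj}}(K)\subseteq K$ via the $1\times 1$ matrix $M_0(\tchoose{A_{\ell}}{b_{\ell}}*y)$, nestedness by restricting to the smaller index family so the relevant matrices become principal submatrices, and $\conv(K\cap\{0,1\}^n)\subseteq \textsc{Las}_n^{\textrm{proj}}(K)$ via the lift $y_I=\prod_{i\in I}x_i$ with $M_n(y)=yy^T$ and $M_n(\tchoose{A_{\ell}}{b_{\ell}}*y)=(A_{\ell}x-b_{\ell})\,yy^T$ --- is correct and essentially identical to the paper's proof. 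The only remark is that the inclusion you call the ``main step,'' namely $\textsc{Las}_n^{\textrm{proj}}(K)\subseteq\conv(K\cap\{0,1\}^n)$, is not part of this lemma (the chain only claims $\supseteq$); it is the paper's separate Lemma~\ref{lem:Convergence}, and your argument for it (inversion formula with $S=[n]$ plus Lemma~\ref{lem:InducedSolutionFeasible}) coincides with the paper's proof there as well, so nothing is wrong --- you have simply proved more than was asked.
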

\begin{proof}
By definition $\textsc{Las}^{\textrm{proj}}_{t}(K) \supseteq \textsc{Las}^{\textrm{proj}}_{t+1}(K)$. 
Let $y \in {\textsc{Las}}_0(K)$. Then $0 \leq (M_0(\tchoose{A_{\ell}}{b_{\ell}} * y))_{\emptyset,\emptyset} = \sum_{i\in[n]} A_{\ell i}y_{\{i\}} - b_i$,
thus $A(y_{\{1\}},\ldots,y_{\{n\}}) \geq b$ and $K \supseteq \textsc{Las}_0^{\textrm{proj}}(K)$.

Finally let $x \in K \cap \{ 0,1\}^n$ and define $y \in \setR^{\mathcal{P}([n])}$ with 
$y_I := \prod_{i \in I} x_i$. Then one has $M_n(y) = yy^T \succeq 0$. Furthermore $M_n(\tchoose{A_{\ell}}{b_{\ell}} * y) = (A_{\ell}x-b_{\ell}) \cdot yy^T \succeq 0$, 
thus $y \in \textsc{Las}_n(K)$. 
\end{proof}

The following statement implies that $\textsc{Las}_n^{\proj}(K) = \conv(K \cap \{ 0,1\}^n)$:
\begin{lemma} \label{lem:Convergence}
$\textsc{Las}_n(K) \subseteq \conv\{ (\prod_{i\in I} x_i)_{I \subseteq [n]} \mid x \in \{ 0,1\}^n : Ax \geq b \}$.
\end{lemma}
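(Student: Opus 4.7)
The plan is to combine the inversion formula (Lemma on $y = \sum_{X \subseteq S}\{y\}_{X,-S\setminus X}$) with the feasibility-of-conditioning lemma, specialized to $S = [n]$. When we condition on all coordinates, the normalized conditioned solution is forced to be $0/1$-valued on every singleton, so the only thing left to verify is that the (unnormalized) coefficients form a proper convex combination of feasible integer vectors.

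Concretely, let $y \in \textsc{Las}_n(K)$ and for each $X \subseteq [n]$ set $z^X := \{y\}_{X,-[n]\setminus X}$ and $\lambda_X := z^X_{\emptyset} = y_{X,-[n]\setminus X}$. First I would show $\lambda_X \geq 0$: applying Lemma~\ref{lem:MYPsd-impliziert-MZpsd} with $S = [n]$ and $\mathcal{T} = \mathcal{P}_{n}([n])$, the $1\times 1$ matrix $M_{\emptyset}(z^X) = (\lambda_X)$ is positive semidefinite, i.e.\ $\lambda_X \geq 0$. Next, by the Inversion Formula applied to the $\emptyset$-entry, $\sum_{X \subseteq [n]} \lambda_X = y_{\emptyset} = 1$. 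So $\{\lambda_X\}$ is a probability distribution over $\{0,1\}^n$.

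Now for each $X$ with $\lambda_X > 0$, let $w^X := z^X/\lambda_X$. By Lemma~\ref{lem:InducedSolutionFeasible} with $S=[n]$ we have $w^X_{\{i\}} = 1$ if $i \in X$ and $w^X_{\{i\}} = 0$ otherwise; more generally, using Theorem~\ref{thm:PropertiesOfLasT}(g) (or directly from the same lemma applied to $I \subseteq [n]$), $w^X_I = \prod_{i \in I} \mathbf{1}[i \in X]$ for every $I \subseteq [n]$. Moreover the same lemma guarantees $M_0(\tchoose{A_\ell}{b_\ell} * w^X) \succeq 0$ for each $\ell$, which just says $A_\ell x^X \geq b_\ell$ where $x^X \in \{0,1\}^n$ is the indicator vector of $X$. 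Hence $x^X \in K \cap \{0,1\}^n$.

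Finally, applying the Inversion Formula entry-wise gives $y_I = \sum_{X \subseteq [n]} z^X_I = \sum_{X: \lambda_X > 0} \lambda_X \cdot w^X_I = \sum_X \lambda_X \prod_{i \in I} x^X_i$, so $y$ lies in $\conv\{(\prod_{i \in I} x_i)_{I \subseteq [n]} \mid x \in K \cap \{0,1\}^n\}$, as required. The main technical subtlety is invoking the correct lemma to extract both feasibility ($A x^X \geq b$) and full integrality of $w^X$ from a single PSD constraint after conditioning on the entire ground set; once Lemmas~\ref{lem:MYPsd-impliziert-MZpsd}, \ref{lem:CommutativityOfShiftOp} and \ref{lem:InducedSolutionFeasible} are in hand, all the remaining steps are a direct unpacking of the M\"obius inversion \eqref{eq:yI-is-sum-of-yx}.
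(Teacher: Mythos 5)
Your proof is correct and follows essentially the same route as the paper: write $y=\sum_{X\subseteq[n]}\{y\}_{X,-[n]\setminus X}$ via the inversion formula and apply Lemma~\ref{lem:InducedSolutionFeasible} with $S=[n]$ to see that each normalized term is the moment vector of a feasible integral point. The only difference is that you make explicit the nonnegativity of the weights $\lambda_X$ (via the $1\times 1$ PSD minor from Lemma~\ref{lem:MYPsd-impliziert-MZpsd}), which the paper leaves implicit.
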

\begin{proof}
For all $X\subseteq[n]$, define $z^{X} := \{ y \}_{X,-[n]\backslash X} \in \setR^{\mathcal{P}([n])}$ and $w^X := \frac{z^X}{z^X_{\emptyset}}$ if $z_{\emptyset}^X>0$. Then
\[
y \stackrel{\textrm{Lemma~\ref{lem:InversionFormula}}}{=} \sum_{X \subseteq [n]} z^X = \sum_{X \subseteq [n]: z^X_{\emptyset} > 0} z_{\emptyset}^X \cdot w^X. 
\]
This is a convex combination since $\sum_{X \subseteq [n]} z_{\emptyset}^X = y_{\emptyset} = 1$ (again by Lemma~\ref{lem:InversionFormula}). For a fixed $X$, we abbreviate $x_i := w^X_{\{i\}}$. 
Then Lemma~\ref{lem:InducedSolutionFeasible} provides that $w^X \in \textsc{Las}_n(K)$ (thus $x \in K$); $x \in \{ 0,1\}^n$ and $w^X_I = \prod_{i\in I} x_i$.
\end{proof}

\subsection{Local consistency}

Let $y \in \setR^{\mathcal{P}_{t}(V)}$ and  $t'<t$. Then $y_{|\mathcal{P}_{t'}(V)} \in \setR^{\mathcal{P}_{t'}(V)}$
is the vector that emerges from $y$ after deletion of all entries $I$ with $|I| > t'$. 
Moreover, for any vector $y \in \setR^{\mathcal{T}}$ with $\mathcal{T} \subseteq \mathcal{P}([n])$, we define
the \emph{extension} $y' \in \setR^{\mathcal{P}([n])}$ as the vector $y$ where missing entries 
are filled with zeros. 
\begin{lemma}
Let $y \in \textsc{Las}_{t}(K)$ and $S \subseteq [n]$ with $|S| \leq t$. Then 
\[
y \in \conv\{ w \mid w_{|\mathcal{P}_{2(t - |S| ) + 2}} \in {\textsc{Las}}_{t-|S|}(K); w_{\{i\}} \in \{ 0,1\} \; \forall i \in S \}
\]
\end{lemma}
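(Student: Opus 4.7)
The plan is to mimic the convergence argument of Lemma~\ref{lem:Convergence}, but conditioning only on the variables in $S$ rather than on all of $[n]$. Concretely, for each $X \subseteq S$ I would set $z^X := \{y\}_{X,-S\setminus X}$ and (when $z^X_\emptyset > 0$) $w^X := z^X / z^X_\emptyset$, and then assemble $y$ as the convex combination $y = \sum_{X\subseteq S} z^X_\emptyset \cdot w^X$.

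First I would check that the coefficients $\lambda_X := z^X_\emptyset$ form a probability distribution. Evaluating Lemma~\ref{lem:InversionFormula} at the entry $\emptyset$ gives $\sum_{X\subseteq S} z^X_\emptyset = y_\emptyset = 1$, and each $z^X_\emptyset \geq 0$ follows from $M_{t+1}(y)\succeq 0$ applied to the principal submatrix indexed by subsets of $X$ and $S\setminus X$ (or, more directly, since $z^X_\emptyset$ is the Lasserre probability of the partial assignment $(X, S\setminus X)$). Next, I would apply Lemma~\ref{lem:InducedSolutionFeasible} with $\mathcal{T}_1 := \mathcal{P}_{t+1}([n])$ and $\mathcal{T}_2 := \mathcal{P}_t([n])$; the key numerical check is that $\mathcal{T}_1\ominus S = \mathcal{P}_{t+1-|S|}([n])$ and $\mathcal{T}_2\ominus S = \mathcal{P}_{t-|S|}([n])$, which are exactly the index sets that appear in the definition of $\textsc{Las}_{t-|S|}(K)$. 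That lemma then hands over simultaneously that the restriction of each $w^X$ to $\mathcal{P}_{2(t-|S|)+2}([n])$ satisfies all PSD constraints and has $w^X_{\{i\}} \in \{0,1\}$ for every $i\in S$, exactly as required.

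It then remains to verify that $\sum_X \lambda_X w^X$ really equals $y$ on every entry of $\mathcal{P}_{2t+2}([n])$. Lemma~\ref{lem:InversionFormula} gives this equality on $\mathcal{P}_{2t+2-|S|}([n])$, which comfortably contains $\mathcal{P}_{2(t-|S|)+2}([n])$ (since $2t+2-|S| \geq 2(t-|S|)+2$). For the remaining ``high'' entries $|I|$ with $2t+2-|S| < |I| \leq 2t+2$, I would simply extend each $w^X$ by copying the corresponding entry of $y$; this does not disturb the Lasserre-$(t-|S|)$ restriction already certified above, and trivially $\sum_X \lambda_X \cdot y_I = y_I$. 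The convex combination then matches $y$ on all entries, completing the argument.

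The only genuine obstacle is bookkeeping: being careful that the operator $\ominus S$ reduces $\mathcal{P}_{t+1}$ and $\mathcal{P}_t$ to exactly the domains needed for $\textsc{Las}_{t-|S|}$, and handling the gap between the natural domain $\mathcal{P}_{2t+2-|S|}$ of the inversion formula and the ambient domain $\mathcal{P}_{2t+2}$ in the statement. Once these are reconciled, the main lemmas (inversion, commutativity of the shift operator, and feasibility of the conditioned solution) plug in directly and the result follows without further work.
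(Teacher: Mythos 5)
Your proposal is correct and follows essentially the same route as the paper: decompose $y$ via the inversion formula over the partial assignments $X \subseteq S$ and invoke Lemma~\ref{lem:InducedSolutionFeasible} with $\mathcal{T}_1 = \mathcal{P}_{t+1}([n])$ and $\mathcal{T}_2 = \mathcal{P}_{t}([n])$; the paper handles the high-order entries simply by zero-extending $y$ to $\setR^{\mathcal{P}([n])}$ before conditioning rather than copying $y_I$ into each $w^X$, which amounts to the same thing for the statement being proved. One small correction: $\mathcal{P}_{t+1}([n]) \ominus S$ is in general a strict superset of $\mathcal{P}_{t+1-|S|}([n])$ (it consists of all $I$ with $|I \cup S| \leq t+1$), not equal to it, but the inclusion $\mathcal{P}_{t+1-|S|}([n]) \subseteq \mathcal{P}_{t+1}([n]) \ominus S$ is exactly what your argument needs, so this does not affect correctness.
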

\begin{proof}
Again we can write a convex combination $y' = \sum_{X \subseteq S} z_{\emptyset}^X \cdot w^X$ (with $w^X \in \setR^{\mathcal{P}(n)}$) according 
to Lemma~\ref{lem:InversionFormula}.  Recall that $\mathcal{P}_{2t+2}([n]) \ominus S \supseteq \mathcal{P}_{2(t-|S| ) + 2}([n])$,
thus Lemma~\ref{lem:InducedSolutionFeasible} provides that $w_{|\mathcal{P}_{2(t - |S| ) + 2}([n])} \in \textsc{Las}_{t - |S|}(K)$
and that $w$ is integral on $S$.
\end{proof}


\subsection{The decomposition theorem}

Imagine for a second that $y \in \textsc{Las}_t(K)$ and $y_I = 0$ for all 
$|I| \geq t$. Then we can just fill the matrices $M_{t+1}(y)$ and $M_{t}(\tchoose{A_{\ell}}{b_{\ell}} * y)$ with zeros to obtain $M_n(y)$ and $M_{n}(\tchoose{A_{\ell}}{b_{\ell}} * y)$ without destroying positive semi-definiteness or their consistency. 
Consequently, $y$ would even be in the convex hull of feasible integral vectors, 
even though we only assumed $y$ to be a $t$-round solution. 
With a bit more care, this approach applies more generally to any subset of variables. 
\begin{theorem}[Decomposition Theorem~\cite{IntGapLasserre-for-Knapsack-IPCO2011}]
Let $0\leq k\leq t$, $y \in {\textsc{Las}}_t(K)$, $X\subseteq S \subseteq V$ so that $|I \cap S| > k \Rightarrow y_I = 0$.
Then
\[
  y \in \conv\{ w \mid w_{|2(t-k)+2} \in \textsc{Las}_{t-k}(K); \; w_{\{i\}} \in \{ 0,1\} \; \forall i\in S\}
\]
\end{theorem}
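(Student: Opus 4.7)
The plan is to apply the Inversion Formula (Lemma~\ref{lem:InversionFormula}) with the set $S$ and then to sharpen Lemma~\ref{lem:InducedSolutionFeasible} by exploiting the hypothesis $y_I=0$ whenever $|I\cap S|>k$. First I would extend $y$ to $y'\in\setR^{\mathcal{P}([n])}$ by zero-padding and write $y=\sum_{X\subseteq S} z^X$ with $z^X:=\{y'\}_{X,-(S\setminus X)}$ and $\sum_X z^X_\emptyset=1$. For any $X$ with $|X|>k$, every term $y'_{I\cup X\cup H}$ in the M\"obius sum defining $z^X_I$ uses an index whose intersection with $S$ contains $X$ and so has size $>k$; by hypothesis all such terms vanish, hence $z^X\equiv 0$. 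Only subsets $X$ with $|X|\leq k$ survive, and I would set $w^X:=z^X/z^X_\emptyset$ whenever $z^X_\emptyset>0$. The integrality $w^X_{\{i\}}\in\{0,1\}$ for $i\in S$ follows as in Lemma~\ref{lem:InducedSolutionFeasible} from a sign-pairing argument: for any index $K$ containing some $j\in S\setminus X$, pairing each $H\subseteq S\setminus X$ with $H\triangle\{j\}$ in the defining sum cancels all contributions, so $w^X_K=0$; conversely $z^X_{\{i\}}=z^X_\emptyset$ for $i\in X$, giving $w^X_{\{i\}}=1$.

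The main obstacle is to verify the PSD conditions for $w^X$ at the sharper level $t-k$, since the generic Lemma~\ref{lem:InducedSolutionFeasible} would only yield level $t-|S|$ (useless when $|S|>t$). Because the rows and columns of $M_{t-k+1}(z^X)$ indexed by $I$ with $I\cap(S\setminus X)\neq\emptyset$ vanish, PSD-ness reduces to that of the principal submatrix $A$ indexed by $I_1\subseteq[n]\setminus S$ with $|I_1|\leq t-k+1$ and entries $A_{I_1,J_1}=z^X_{X\cup I_1\cup J_1}$. Writing $M_{t+1}(y)$ as a Gram matrix with vectors $v_I$ ($|I|\leq t+1$), I would define
\[
  u_{I_1}:=\sum_{H\subseteq S\setminus X,\; |H|\leq k-|X|}(-1)^{|H|}\,v_{X\cup I_1\cup H}.
\]
The truncation $|H|\leq k-|X|$ ensures $|X\cup I_1\cup H|\leq k+(t-k+1)=t+1$, so every $v$ in the sum is well-defined. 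Expanding $\langle u_{I_1},u_{J_1}\rangle$ as a double sum over $(H,L)$ and grouping by $K=H\cup L$ yields, via the per-element count $-1-1+1=-1$, the identity $\sum_{(H,L):H\cup L=K}(-1)^{|H|+|L|}=(-1)^{|K|}$; the indices $K$ with $|K|>k-|X|$ that are absent from the truncated double sum satisfy $|X\cup K|>k$, so their $y$-values vanish by hypothesis and can be added back at no cost. This gives $\langle u_{I_1},u_{J_1}\rangle=z^X_{X\cup I_1\cup J_1}=A_{I_1,J_1}$, so $A\succeq 0$ and therefore $M_{t-k+1}(z^X)\succeq 0$; in particular $z^X_\emptyset=\|u_\emptyset\|^2\geq 0$, validating the convex combination $y=\sum_{X:|X|\leq k}z^X_\emptyset\cdot w^X$.

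The constraint PSD conditions $M_{t-k}(\tchoose{A_\ell}{b_\ell}*z^X)\succeq 0$ follow by the identical factorization applied to $\tchoose{A_\ell}{b_\ell}*y$, whose entries also vanish on $|I\cap S|>k$ (since $(\tchoose{A_\ell}{b_\ell}*y)_I=\sum_i A_{\ell i}y_{I\cup\{i\}}-b_\ell y_I$ inherits the zero pattern), combined with commutativity of the shift operator (Lemma~\ref{lem:CommutativityOfShiftOp}). Therefore $w^X$ restricted to $\mathcal{P}_{2(t-k)+2}([n])$ lies in $\textsc{Las}_{t-k}(K)$ for each contributing $X$, completing the proof.
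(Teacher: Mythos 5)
Your proof is correct, but it organizes the key step differently from the paper. The paper never truncates the M\"obius sums: it enlarges the index families to $\mathcal{T}_1=\{A : |A\setminus S|\leq t+1-k\}$ and $\mathcal{T}_2=\{A: |A\setminus S|\leq t-k\}$, shows that the zero pattern $|I\cap S|>k\Rightarrow y_I=0$ makes the zero-padded matrices $M_{\mathcal{T}_1}(y')$ and $M_{\mathcal{T}_2}(\tchoose{A_\ell}{b_\ell}*y')$ consistent and PSD (they are block-diagonal with a principal submatrix of $M_{t+1}(y)$, resp.\ $M_t(\tchoose{A_\ell}{b_\ell}*y)$, and zero blocks), notes that $\mathcal{T}_i\ominus S=\mathcal{T}_i$, and then simply invokes the generic conditioning machinery (Lemmas~\ref{lem:MYPsd-impliziert-MZpsd}--\ref{lem:InducedSolutionFeasible}) on these families, which contain $\mathcal{P}_{t+1-k}$ and $\mathcal{P}_{t-k}$. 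You instead exploit the same zero pattern inside the Gram computation: you cut the conditioning sum at $|H|\leq k-|X|$ so that all indices stay within level $t+1$ (resp.\ $t$), and recover the full M\"obius identity because the dropped terms have $|(X\cup K)\cap S|>k$ and hence vanish. Both routes use the hypothesis in the same place; the paper's buys modularity (the conditioning lemmas are reused verbatim as black boxes), yours buys a self-contained verification at exactly the level $t-k$ without introducing the auxiliary families, at the price of redoing the Gram and shift-commutativity bookkeeping by hand. Two small points you gloss over: (i) the moment matrix $M_{t-k+1}(z^X)$ is not literally supported on indices $I_1\subseteq[n]\setminus S$ --- rows indexed by $I$ with $\emptyset\neq I\cap S\subseteq X$ are nonzero but duplicate the row of $I\setminus S$, so you should note that the full matrix is of the form $Q^TAQ$ with $Q$ a 0/1 selection matrix (zero columns for indices meeting $S\setminus X$), whence $A\succeq 0$ suffices; (ii) when $z^X_\emptyset=0$ one needs $z^X$ itself to vanish (on the relevant index family) for the convex combination to reproduce $y$, which follows from the PSD matrix having a zero diagonal entry --- the paper is equally terse here, so this is not a gap specific to your argument.
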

\begin{proof}
Again extend $y$ to $y' \in \setR^{\mathcal{P}(n)}$. 
Define $\mathcal{T}_1 := \{ A \subseteq [n] \mid |A\backslash S| \leq t+1-k\}$ as the set of 
indices that have at most $t+1-k$ indices outside of $S$. 
After sorting the rows and columns by increasing size of index sets, we can write
\[
   \begin{array}{cc}
\small{\begin{matrix} \hspace{1.2cm} J: |J|\leq t+1 & J: |J| >t+1 \end{matrix}} & \\
M_{\mathcal{T}_1}(y') = \begin{pmatrix} \psframebox[framesep=3pt]{\hspace{0.5cm} M\hspace{0.5cm}} & \psframebox[framesep=5pt]{\hspace{0.5cm}\mathbf{0}\hspace{0.5cm}} \\
\psframebox[framesep=5pt]{\hspace{0.5cm}\mathbf{0}\hspace{0.5cm}} & \psframebox[framesep=5pt]{\hspace{0.5cm}\mathbf{0}\hspace{0.5cm}} 
\end{pmatrix}
\succeq 0 &
\end{array}
\]
where $M$ is a principal submatrix of $M_{t+1}(y) \succeq 0$.
Next, observe that there are entries $y_J'$ that may appear inside of $M$ and outside. 
If they appear outside, say
at entry $(J_1,J_2)$, then $J = J_1 \cup J_2$ and $|J_1| > t+1$, but $|J_1\backslash S| \leq t+1-k$ (so that $J_1 \in \mathcal{T}_1$).
Thus $|J \cap S| \geq |J_1 \cap S| = |J_1| - |J_1\backslash S| > k$, thus $y_J' = 0$ by assumption\footnote{In other words, the matrix $M_{\mathcal{T}_1}(y')$ is \emph{consistent} 
in the sense that entries $(J_1,J_2)$ and $(J_3,J_4)$ are identical whenever $J_1 \cup J_2 = J_3 \cup J_4$.}.

Analogously for $\mathcal{T}_2 := \{ A \subseteq [n] \mid |A\backslash S| \leq t-k\}$ one can write
\[
   \begin{array}{cc}
\hspace{1.5cm} \small{\begin{matrix} \quad J: |J| \leq t & \quad\;\; J: |J| > t \end{matrix}} & \\
M_{\mathcal{T}_2}(\tchoose{A_{\ell}}{b_{\ell}} * y') =\begin{pmatrix}  \psframebox[framesep=3pt]{\hspace{0.5cm}N\hspace{0.5cm}} & \psframebox[framesep=5pt]{\hspace{0.5cm}\mathbf{0}\hspace{0.5cm}} \\
\psframebox[framesep=5pt]{\hspace{0.5cm}\mathbf{0}\hspace{0.5cm}} & \psframebox[framesep=5pt]{\hspace{0.5cm}\mathbf{0}\hspace{0.5cm}} 
\end{pmatrix} \succeq 0
&
\end{array}
\]
with $N\succeq0$ being a principal submatrix of $M_{t}({A_{\ell} \choose b_{\ell}} * y) \succeq 0$.
Again, consider an entry $J$ that appears at least once outside of $N$ thus
we can write $J= J_1\cup J_2$ with $|J_1| > t$ and $J_1 \in \mathcal{T}_2$,
hence $|J_1 \backslash S| \leq t-k$. Then for any $J'\supseteq J_1$ one has $|J'\cap S| \geq |J_1 \cap S| = |J_1|-|J_1\backslash S| \geq (t+1)-(t-k)> k$ and $y_{J'}' = 0$ by assumption.
The matrix entry at position $J$ is $\sum_{i\in[n]} A_{\ell i}y_{J\cup\{i\}}-\beta y_{J} = 0$, 
hence the matrix is defined consistently.

Define $z^X := \{ y \}_{X,-S\backslash X} \in \setR^{\mathcal{P}([n])}$ and $w^X := \frac{z^X}{z_{\emptyset}^X}$ ($w^X:=\mathbf{0}$, if $z_{\emptyset}^X=0$), 
then as in Lemma~\ref{lem:Convergence}, we can express $y'$ as  convex combination $y' = \sum_{X \subseteq S} z_{\emptyset}^X w^X$. 
Since $\mathcal{T}_1 \ominus S = \mathcal{T}_1$ and $\mathcal{T}_2 \ominus S = \mathcal{T}_2$, we apply Lemma~\ref{lem:InducedSolutionFeasible} and obtain
 $M_{\mathcal{T}_1}(w^X) \succeq 0$,  $M_{\mathcal{T}_2}(\tchoose{A_{\ell}}{b_{\ell}} * w^X) \succeq 0$ for all $\ell \in [m]$ and $w^X_{\emptyset} = 1$.
Observe that  $\mathcal{P}_{t+1-k}([n]) \subseteq \mathcal{T}_1$ and $\mathcal{P}_{t-k}([n]) \subseteq \mathcal{T}_2$. Consequently $w_{|2(t-k)+2} \in \textsc{Las}_{t-k}(K)$ as claimed. 
\end{proof}
Now we argue, why the Decomposition Theorem implies Property~\ref{thm:PropertiesOfLasT}.(\ref{item:PropertiesOfLasT-DecThm}).
\begin{lemma*}[Property~\ref{thm:PropertiesOfLasT}.(\ref{item:PropertiesOfLasT-DecThm})]
Let $K = \{ x \in \setR^n \mid Ax \geq b\}$ and $y \in {\textsc{Las}}_t(K)$ and assume that
for a subset of variables $S \subseteq [n]$ one has $\max\{ |I| : I \subseteq S; x \in K; x_i = 1\; \forall i \in I\} \leq k < t$.
Then $y \in \conv(\{z \in \textsc{Las}_{t-k}(K) \mid z_{\{i\}} \in \{ 0,1\} \; \forall i\in S\})$.
\end{lemma*}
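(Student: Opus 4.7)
The plan is to reduce the statement to a direct application of the Decomposition Theorem proved just above. That theorem takes as hypothesis the vanishing condition $|I \cap S| > k \Rightarrow y_I = 0$ for every index set $I$ in the support of $y$ (i.e.\ $|I| \leq 2t+2$); once I verify this for our $y$, the conclusion is immediate. So the entire task is to translate the assumption ``$\max\{|I| : I \subseteq S,\ x \in K,\ x_i=1\ \forall i \in I\} \leq k$'' into this vanishing. The base case is easy: for any $I^* \subseteq S$ with $|I^*| = k+1$, the assumption says there is no $x \in K$ with $x_i = 1$ for all $i \in I^*$, i.e.\ $K \cap \{x \in \setR^n \mid x_i = 1\ \forall i \in I^*\} = \emptyset$. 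Since $|I^*| = k+1 \leq t$, Property~\ref{thm:PropertiesOfLasT}.\eqref{item:PropertiesOfLasT-Cond-for-yI0} immediately gives $y_{I^*} = 0$.

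The hard part is propagating this vanishing to larger index sets, because monotonicity (Property~\ref{thm:PropertiesOfLasT}.\eqref{item:PropertiesOfLasT-Monotone}) is only available for $|I| \leq t$, whereas I need the conclusion for all $|I| \leq 2t+2$. I plan to bridge this gap with two short $2\times 2$ principal-submatrix arguments against $M_{t+1}(y) \succeq 0$. First, for any $A \supseteq I^*$ with $|A| \leq t+1$, the submatrix indexed by $\{I^*, A\}$ is
\[
\begin{pmatrix} y_{I^*} & y_{I^* \cup A} \\ y_{I^* \cup A} & y_A \end{pmatrix} = \begin{pmatrix} 0 & y_A \\ y_A & y_A \end{pmatrix}
\]
(using $I^* \subseteq A$ and $y_{I^*} = 0$), and nonnegativity of its determinant forces $y_A = 0$. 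Second, given any $I$ with $|I| \leq 2t+2$ and $|I \cap S| > k$, choose $I^* \subseteq I \cap S$ of size $k+1$ and split $I = A \cup B$ with $I^* \subseteq A$ and $|A|,|B| \leq t+1$; this is always possible since $|I^*| = k+1 \leq t < t+1$ and $|I| \leq 2t+2$. Then $y_A = 0$ by the first step, so the submatrix of $M_{t+1}(y)$ indexed by $\{A,B\}$ is $\begin{pmatrix} 0 & y_I \\ y_I & y_B \end{pmatrix}$, whose nonnegative determinant forces $y_I = 0$.

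Having established $|I \cap S| > k \Rightarrow y_I = 0$ on the entire support of $y$, the Decomposition Theorem then yields $y \in \conv\{w \mid w_{|\mathcal{P}_{2(t-k)+2}([n])} \in \textsc{Las}_{t-k}(K),\ w_{\{i\}} \in \{0,1\}\ \forall i \in S\}$, which is exactly the claim modulo the formal restriction of $w$ to lower-index coordinates flagged by the footnote on item~\eqref{item:PropertiesOfLasT-IntOnI}. The only nontrivial step is the lifting of vanishing from the ``small'' regime $|I| \leq t$, where Property~\ref{thm:PropertiesOfLasT}.\eqref{item:PropertiesOfLasT-Cond-for-yI0} and monotonicity suffice, into the ``large'' regime $|I| \leq 2t+2$; the two PSD submatrix computations above handle this cleanly.
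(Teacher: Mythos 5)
Your proof is correct and takes essentially the same route as the paper: derive $y_{I^*}=0$ for every $(k+1)$-subset $I^*\subseteq S$ from Property~\ref{thm:PropertiesOfLasT}.(\ref{item:PropertiesOfLasT-Cond-for-yI0}), propagate the vanishing to all index sets $I$ with $|I|\leq 2t+2$ and $|I\cap S|>k$ via nonnegativity of $2\times 2$ principal-submatrix determinants of $M_{t+1}(y)$, and then invoke the Decomposition Theorem. Your explicit two-step splitting $I=A\cup B$ with $I^*\subseteq A$ and $|A|,|B|\leq t+1$ simply spells out what the paper's terse ``we summarize'' step implicitly requires.
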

\begin{proof}
Consider an index set $I \subseteq S$ with $|I| = k+1\leq t$. Then Property~\ref{thm:PropertiesOfLasT}.(\ref{item:PropertiesOfLasT-Cond-for-yI0}) implies that $y_I = 0$. For all $|I|,|J| \leq t+1$ with $y_I=0$, inspecting
the determinant of the principal submatrix of $M_{t+1}(y)$ induced by indices $I$ and $J$, we see that
\[
  0\leq \det \begin{pmatrix} 0 & y_{I \cup J} \\ 
  y_{I \cup J} & y_J \end{pmatrix} = -y_{I \cup J}^2,
\]
thus $y_{I \cup J} = 0$ (i.e., all entries are monotone). 
We summarize: All entries $I \subseteq [n]$ with $|I| \leq 2(t+1)$
and $|I \cap S| > k$ have  $y_I=0$. Then by the Decomposition Theorem we have
\[
y \in \conv\{ w \mid w_{|2(t-k)+2} \in \textsc{Las}_{t-k}(K); \; w_{\{i\}} \in \{ 0,1\} \; \forall i\in S\}.
\]
\end{proof}
\end{document}